\documentclass[final]{l4dc2021}

\jmlrvolume{}
\jmlryear{}
\jmlrproceedings{PMLR}{}
\editors{}
\jmlrworkshop{Accepted to 3rd Annual Conference on Learning for Dynamics and Control --- Long Version}
\makeatletter
\renewcommand{\@titlefoot}{}
\makeatother


\usepackage{times}

\usepackage[english]{babel}
\usepackage[utf8]{inputenc}
\usepackage{enumitem}

\usepackage{amsfonts}       

\usepackage[section]{placeins}
\usepackage{algorithm}
\usepackage[noend]{algpseudocode}

\usepackage[utf8]{inputenc} 
\usepackage[T1]{fontenc}    
\usepackage{booktabs}       
\usepackage{nicefrac}       
\usepackage{microtype}      

\usepackage{xcolor}
\usepackage{graphicx}
\graphicspath{ {./images/} }
\usepackage{tikz}
\usepackage{wrapfig}

\usepackage[font=small,labelfont=bf]{caption}

\usepackage{siunitx}

\usepackage{cleveref}		

\newtheorem{thm}{Theorem}
\newtheorem{lem}{Lemma}

\newtheorem{defn}{Definition}
\newtheorem{assumpn}{Assumption}\Crefname{assumpn}{Assumption}{Assumptions}
\crefname{cor}{Corollary}{Corollaries}
\newtheorem{prb}{Problem}\crefname{prb}{Problem}{Problems}

\allowdisplaybreaks




\title[Feedback from Pixels]{\LARGE \bf Feedback from Pixels:\\
       \textit{Output Regulation via Learning-Based Scene View Synthesis}}
\author{%
 \Name{Murad Abu-Khalaf} \Email{murad@csail.mit.edu}\\
 \addr MIT Computer Science and Artificial Intelligence Laboratory\\
 \addr Cambridge, MA 02139, USA
 \AND
 \Name{Sertac Karaman} \Email{sertac@mit.edu}\\
 \addr MIT Laboratory for Information and Decision Systems\\
 \addr Cambridge, MA 02139, USA
 \AND
 \Name{Daniela Rus} \Email{rus@csail.mit.edu}\\
 \addr MIT Computer Science and Artificial Intelligence Laboratory\\
 \addr Cambridge, MA 02139, USA
}

\begin{document}

\maketitle

\begin{abstract}%
 We propose a novel controller synthesis involving feedback from pixels, whereby the measurement is a high dimensional signal representing a pixelated image with Red-Green-Blue (RGB) values. The approach neither requires feature extraction, nor object detection, nor visual correspondence. The control policy does not involve the estimation of states or similar latent representations. Instead, tracking is achieved directly in image space, with a model of the reference signal embedded as required by the internal model principle. The reference signal is generated by a neural network with learning-based scene view synthesis capabilities. Our approach does not require an end-to-end learning of a pixel-to-action control policy. The approach is applied to a motion control problem, namely the longitudinal dynamics of a car-following problem. We show how this approach lend itself to a tractable stability analysis with associated bounds critical to establishing trustworthiness and interpretability of the closed-loop dynamics.
\end{abstract}

\begin{keywords}%
  Pixels, Feedback Control, View Synthesis, Visual Servoing, Car-Following, Stability%
\end{keywords}

\section{Introduction} \label{Section:Intro}
Our aim is to investigate the integration of visual signals into feedback loops for the purpose of controller synthesis and analysis, and without requiring a perception module in the loop. We treat the camera as a high-dimensional sensor and propose a principled approach grounded in mathematical control theory to investigate stability and associated theoretical limitations of the closed-loop performance. 

In this paper, we consider output regulation class of problems where the output measurement includes a pixelated image. We feel the contribution of this paper is as follows:

\begin{itemize}[noitemsep]
  \item We treat each RGB pixel as a measurement and do not attempt to grayscale or threshold the image and can handle an arbitrary image size or resolution.
  \item Compared to visual servoing approaches, our work does not involve hand-crafted geometrical feature extractions, correspondence or matching, pose estimation, or an interaction matrix.
  \item Unlike most existing approaches, we integrate vision into reactive low-level control without a need for a perception module, end-to-end imitation learning, the estimation of states or similar latent representations.
  \item Our approach works for moving targets and non-stationary environments.
  \item Embedded in our controller is an internal model of the tracked visual reference. This is achieved by incorporating a view synthesizer in the loop at inference or execution time.
  \item We show a systematic way to synthesize static output feedback controllers, such as a proportional controller, via necessary and sufficient conditions in the literature.
  \item Our approach does not require discretizing the action space or the state space, and works in continuous-time synthesis and analysis.
  \item Our work is amenable to stability analysis.
  \item In the car-following example, our approach maintains physically interpretable representations of the underlying dynamics, \emph{e.g.} state-space variables from first principles.
\end{itemize}

In \Cref{Section:RelatedWork}, we provide a context to our contribution by reviewing related work. \Cref{Section:Notation} covers notational remarks. \Cref{Section:ProblemFormulation} introduces the problem statement concisely in the context of an application domain, while \Cref{Section:MainResult} presents the main result. In \Cref{Section:Conclusion} we provide conclusions and future directions. \Cref{Section:SimulationResults} shows simulations using CARLA from \cite{Dosovitskiy17}.

\subsection{Related Work} \label{Section:RelatedWork}
Several recent results for vision-in-the-loop control attempt to leverage learning-based approaches via end-to-end learning, mainly imitation learning, to essentially map pixels to actions via a static map as in \cite{DBLP:journals/corr/BojarskiTDFFGJM16} and \cite{8594386} in the context of driving. Another body of work attempts to first get a latent representation of the underlying dynamics of the process from visual input as in \cite{NIPS2015_5964}, \cite{pmlr-v84-banijamali18a}, \cite{pmlr-v97-hafner19a} and structured latent representations as in \cite{NIPS2016_6379}. In \cite{pmlr-v97-zhang19m}, such latent representations are used in model-based reinforcement learning in the context of manipulation. 

In \cite{5733432}, geometric feature extraction or matching was alleviated by using the luminance of all pixels in 2D direct visual servoing. However, such methods require computing explicitly an interaction matrix and solving a nonlinear optimization problem resulting in a small region of convergence. Therefore, in \cite{7989442} and \cite{8461068}, the relative pose error is learned from a current and reference images for the purpose of posed-based visual servoing. While these methods alleviate the need for camera parameters and scene geometry, servoing is done towards a non-moving target.

In \cite{8957584}, a data-driven simulator is used to train a policy via reinforcement learning from an initial stable policy provided by a human driver. The simulator generates perturbations along an initial policy by taking a 2D image captured along the initial trajectory, creating a depth map and a 3D point cloud, applying a desired viewpoint transformation on the 3D data, then synthesizing a novel 2D view of the scene. The view synthesizer is not deployed at inference time; only the learned policy is.

A different body of work leverages video prediction in the form of visual foresight and scene view synthesis \cite{8594031} and \cite{8624332} along with model predictive control as in \cite{8750823} in the context of robot navigation.

Closed-loop stability is emphasized in \cite{6736325} in the context of sloshing dynamics, where no geometric feature extraction is done. Instead, a single-input multi-output system identification is used to approximate and map linearly the input to a matrix representing a reduced grayscale image of the liquid surface. The linear time-invariant (LTI) system is then converted to a port-Hamiltonian system where a passivity-based controller is applied. To reduce computational intensity, \cite{7039720} applies model reduction on the matrix space to reduce output size then performs LQG control, while \cite{6781580} uses feature extraction to map the liquid surface to polynomial space.

Another recent approach by \cite{pmlr-v120-dean20a} proposes to learn a perception map from high-dimensional data, the image, to a low dimensional latent representation as a state or partial state observation. Robust control is applied on the low dimensional latent representation, resulting in a dynamic output feedback controller and stability is shown under specific conditions, and extended by \cite{dean2020certainty} and \cite{dean2020guaranteeing} to show safety.

In \cite{suh2020surprising}, Lyapunov stability to a target set is shown for an approach based on image visual foresight using linear models to solve a quasi-static pile manipulation problem. The state represents a grayscale image of the pile and image-to-image transitions are learned via switched-linear models. The action space is discrete and switching among actions corresponds to switching among linear models.

\subsection{Notation} \label{Section:Notation}
$\mathbb{R}$ denotes the real line. Given multidimensional array $Y\in \mathbb{R}^{p\times q \times r} $,  $vec(\cdot )$ orderly stacks the $q\times r$ columns of $Y$ one slice at a time until $r$. An all one-entries $n\times m$ matrix is denoted by $\mathbf{1}_{n\times m} $, and by $\mathbf{1}$ when the size is context-dependent. A continuous vector function $f$ of dimension $m$ that is a function of an $n$ dimensional vector is represented by $\mathcal{C}(\mathbb{R}^{n\times 1},\mathbb{R}^{m\times 1})=\{f:\mathbb{R}^{n\times 1} \rightarrow \mathbb{R}^{m\times 1} | f \in \mathcal{C}\}$. $I_{Cam}\in \mathbb{R}^{W\times H \times C}$ denotes an RGB image from a camera, and $I_{Syn}\in \mathbb{R}^{W\times H \times C}$ is an RGB image from a synthesizer, of width, height and channel sizes denoted by $W$, $H$ and $C$ respectively.

\section{Problem Formulation --- Car-Following} \label{Section:ProblemFormulation}

\begin{figure}[!htb]
\centering
\begin{minipage}{\textwidth}
\begin{minipage}{0.35\textwidth}
\hspace*{-3ex}
\tikz{
    \node[anchor=south west,inner sep=0] (image) at (0,0,0) {\includegraphics[width=1.8in]{./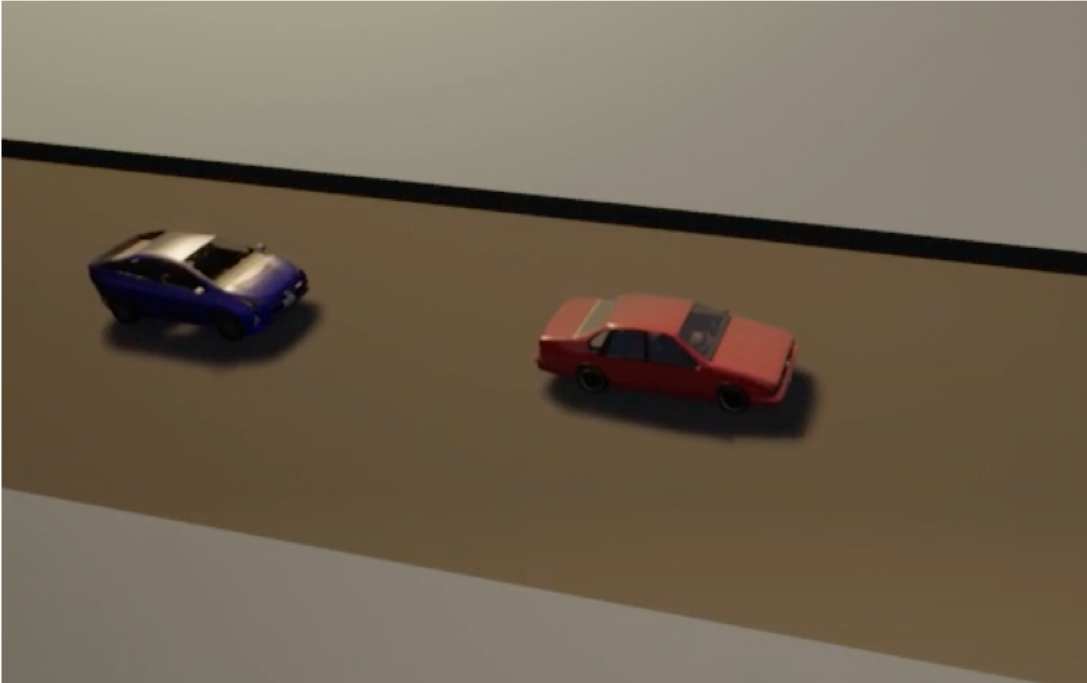}};
    \begin{scope}[x={(image.south east)},y={(image.north west)}]
        \node[align=center,text width=4cm, text=white] (c) at (0.35,0.66) {\small $v_2$};
        \node[align=center,text width=4cm, text=white] (c) at (.8,0.56) {\small $v_1$};
        \node[align=center,text width=4cm, text=white] (c) at (.35,0.4) {\small $s$};
        \draw[white,ultra thick,-latex] (0.29,0.60)  --  (0.4,0.57);
        \draw[white,ultra thick,-latex] (0.74,0.50)  --  (0.85,0.47);
        \draw[white,line width=2pt,dashed] (0.25,.49) --  (0.5,.42);
    \end{scope}
}
\caption{Car-following.}
\label{Figure:CarFollowing}
\end{minipage}
\hspace*{-4ex}
\begin{minipage}{0.34\textwidth}
\begin{tabular}{ll}
$v_1(t)\in \mathbb{R}$ & leader speed, \\
$v_2(t)\in \mathbb{R}$& follower speed,\\
${f}_2(t)\in \mathbb{R}$ & follower force,\\
$s(t)\in \mathbb{R}$ & spacing,\\
$\bar{v}$ & leader desired speed, \\
$\bar{s}$ & desired spacing,\\
$\bar{f}_2$ & steady-state force.\\
\end{tabular}
\end{minipage}
\hspace*{1ex}
\begin{minipage}{0.30\textwidth}
\begin{tabular}{l}
error signals:\\
$x_1(t) = \tilde{v}_1(t) = \bar{v}(t) - {v}_1(t)$,\\
$x_3(t) = \tilde{v}_2(t) = \bar{v}(t)-{v}_2(t) $,\\ 
$x_2(t) = \tilde{s}(t) = \bar{s} - s(t)$,\\ 
$u(t) =  \tilde{f_2}(t)= \bar{f}_2(t) - f_2(t)$,\\
$m_1, m_2>0$ mass of vehicles,\\
$\alpha_1,\alpha_2>0$ drag coefficients.\\
\end{tabular}
\end{minipage}

\end{minipage}
\end{figure}

We formulate the problem in the context of a concrete example from the application domain of autonomous driving, namely car-following as depicted in \Cref{Figure:CarFollowing}. In this case, the objective is for the autonomous blue car to follow a leading red car by matching its speed and keeping a desired longitudinal inter-vehicle spacing. The error dynamics can be written as follows:

\begin{subequations} \label{Equation:CarFollowingDynamics}
	\begin{tabular}{p{7cm}p{7cm}}
	    \begin{equation}
		    \dot{x}_1(t) = -\tfrac{\alpha_1}{m_1} x_1(t), \label{Equation:Leader}
		\end{equation} & 
		\begin{equation}
		\dot{x_2}(t) = x_1(t) - x_3(t), \label{Equation:Spacing}
		\end{equation} \\[-27pt]
		\begin{equation}
		\dot{x}_3(t) = -\tfrac{\alpha_2}{m_2} x_3(t) + \tfrac{1}{m_2} u, \label{Equation:Follower}
		\end{equation} &
		\begin{equation}
		y(t) = I_{Cam}(\bar{s}-x_2,\Theta,\Omega), \label{Equation:MeasurementModel}
		\end{equation}\\[-18pt]
	\end{tabular}
	\begin{equation}
    	    e(t) = \bar{y} - y = I_{Cam}(\bar{s},\Theta,\Omega) -I_{Cam}(\bar{s}-x_2,\Theta,\Omega). \label{Equation:TrackingError}
    \end{equation}
\end{subequations}
 
\Crefrange{Equation:Leader}{Equation:Follower} follow from \cite{1098376}. \Cref{Equation:MeasurementModel} is a measurement model where $I_{Cam}(s,\Theta,\Omega)$ represents an image captured by a front-facing camera attached to the follower, and where $\Theta$ represents specific parameters of the leader, while $\Omega$ represents specific parameters of the driving environment background. Moreover, $I_{Cam}(\bar{s},\Theta,\Omega)$ in \eqref{Equation:TrackingError} is a reference image for the same $\Theta$ and $\Omega$ had the spacing been the desired spacing $\bar{s}$. In some sense $I_{Cam}(\bar{s},\Theta,\Omega)$ can be thought of as imagined instead of measured unlike $I_{Cam}(s,\Theta,\Omega)$ which is measured. This builds on neuroscientific concepts of \emph{analysis-by-synthesis} where it is believed that mental imagery plays a role in human vision \cite{Yildirimeaax5979}. We show in \Cref{Section:ViewSynthesis} how to obtain $I_{Cam}(\bar{s},\Theta,\Omega)$.

\begin{assumpn} \label{Assumption:ReferenceImageInvariance}
    \emph{Background Invariance}: Assume that 
    \begin{equation} \label{Equation:ReferenceImageInvariance}
        e(t) = I_{Cam}(\bar{s},\Theta,\Omega) - I_{Cam}(\bar{s}-x_2,\Theta,\Omega) = H(x_2,\bar{s},\Theta).
    \end{equation}
    This says that $e(t)$ is invariant to background changes $\Omega$.
\end{assumpn}

\begin{assumpn} \label{Assumption:Kernel}
    \emph{Null Space}: For $H(x_2,\bar{s},\Theta)$ in \eqref{Equation:ReferenceImageInvariance}, assume that $H(x_2,\bar{s},\Theta) = \mathbf{0} \iff x_2 = 0$. Then it follows that for a given $\bar{s},\Theta$
    \begin{equation} \label{Equation:OutputKernel}
        ker(e(x))=\{x\in \mathbb{R}^{n\times 1}: x_2=0\}.
    \end{equation}
\end{assumpn}

\begin{assumpn} \label{Assumption:ErrorDirection}
    \emph{Error Direction}: Let $h(x_2,\bar{s},\Theta) = vec(H(x_2,\bar{s},\Theta))$. Without loss of generality,
    \begin{equation} \label{Equation:ErrorDirection}
        \bar{s} - s\geq 0  \iff \mathbf{1}^{\intercal} \cdot h(\bar{s} - s,\bar{s},\Theta) \geq 0.
    \end{equation}
\end{assumpn}

\begin{assumpn} \label{Assumption:ReferenceImageMonotonic}
    \emph{Monotonic}: For a given $\bar{s}$ and $\Theta$, consider $h(x_2,\bar{s},\Theta)$ in \eqref{Equation:ErrorDirection}. If $\beta \geq \alpha \geq 0$ or $-\beta \geq -\alpha \geq 0$, then
    \begin{equation} \label{Equation:ReferenceImageMonotonic}
        h(\beta,\bar{s},\Theta)^{\intercal} h(\beta,\bar{s},\Theta) \geq h(\alpha,\bar{s},\Theta)^{\intercal} h(\alpha,\bar{s},\Theta).
    \end{equation}
\end{assumpn}

\begin{assumpn} \label{Assumption:LocallyQuadratic}
    \emph{Locally Quadratic}: For a given $\bar{s}$ and $\Theta$, consider $h(x_2,\bar{s},\Theta)$ in \eqref{Equation:ErrorDirection}. We assume that over a local domain $D \subset \mathbb{R}^{n\times 1}$, where $x=\mathbf{0} \in D$, that
    \begin{equation} \label{Equation:LocallyQuadratic}
        h(x_2,\bar{s},\Theta)^{\intercal} h(x_2,\bar{s},\Theta) \approx c^2(\bar{s},\Theta) {x_2}^2,
    \end{equation}
    for some \emph{nonzero} constant $c(\bar{s},\Theta) \in \mathbb{R}$.
\end{assumpn}

\begin{defn} \label{Definition:UUB} 
	\emph{Uniformly Ultimately Bounded (UUB)} \cite{khalil2002nonlinear}: A solution of $\dot x(t) = f(t,x)$ is said to be UUB with an ultimate bound of $\epsilon$ if $\exists \epsilon >0, \Delta >0$ such that $\forall \delta\in (0,\Delta)$, $\exists T(\delta,\epsilon) \geq 0:$ 
	\[\lVert x(t_0) \rVert_{2} \leq \delta \implies  \lVert x(t) \rVert_{2} \leq \epsilon, \forall t\geq t_0 +T(\delta,\epsilon).\]
\end{defn}

\begin{prb} \label{Problem:OutputRegulationAsymptotic}
	\emph{Output Regulation Solvability}: Consider the car-following dynamics \eqref{Equation:CarFollowingDynamics}. Determine the existence of a static policy
	\begin{equation}
        u = F(y,\bar{y}), \label{Equation:StaticPolicy}
    \end{equation}
    such that the regulated output $vec(e(t))$ is asymptotically stable with $x_1$, $x_2$, and $x_3$ bounded.
\end{prb}

Note that the static control policy \eqref{Equation:StaticPolicy} does not require knowledge of $\bar{v}$.

\begin{prb} \label{Problem:OutputRegulationUUB}
	\emph{Learning-Based Output Regulation}: Find a static policy \eqref{Equation:LearnedStaticPolicy} for the dynamics \eqref{Equation:CarFollowingDynamics}, where $\hat{\bar{y}}$ is learned to approximate $\bar{y}$, such that $vec(\hat{e}(t))= vec(\hat{\bar{y}}-y) $, $x_1$, $x_2$, and $x_3$ are UUBs:
	\begin{equation}
        u = F(y,\hat{\bar{y}}). \label{Equation:LearnedStaticPolicy}
    \end{equation}
\end{prb}

\section{Main Result} \label{Section:MainResult}
In \Cref{Section:Existence} we show the existence of solutions to \Cref{Problem:OutputRegulationAsymptotic} by casting the problem as a static output feedback problem. In \Cref{Section:ViewSynthesis}, we show the architecture of a view synthesizer that will be used to provide reference images needed to compute the tracking error and regulated output \eqref{Equation:TrackingError}. In \Cref{Section:BlockDiagram}, we show a block diagram of the proposed controller, and discuss how to treat the RGB values so that generality is not lost as stated in \Cref{Assumption:ErrorDirection}. Later in \Cref{Section:Stability}, we show closed-loop stability with the camera and the reference view synthesizer in the loop.

\subsection{Existence of Solutions} \label{Section:Existence}
To address \Cref{Problem:OutputRegulationAsymptotic}, we first note that \eqref{Equation:StaticPolicy} is a static policy. One direction to follow is therefore to reduce \Cref{Problem:OutputRegulationAsymptotic} into the following problem.

\begin{prb} \label{Problem:SOF}
	\emph{Static Output Feedback}: Consider the car-following dynamics \eqref{Equation:CarFollowingDynamics}. Determine the existence of a static policy \eqref{Equation:SOF} such that $x_1$, $x_2$, and $x_3$ are asymptotically stable:
	\begin{equation}
        u(t) = F(e(t)). \label{Equation:SOF}
    \end{equation}
    
\end{prb}

\Cref{Problem:SOF} is a state-regulation problem. The next theorem shows the existence of a solution to this state-regulation \Cref{Problem:SOF}, and thus to the output regulation \Cref{Problem:OutputRegulationAsymptotic}.

\begin{lem} \label{Lemma:SOF}
For a fixed $\bar{s}$, $\Theta$, consider writing \eqref{Equation:Leader}, \eqref{Equation:Spacing} and \eqref{Equation:Follower} in the form $\dot{x}=f(x)+g(x)u(x)$ and $h(x)=h(x_2,\bar{s},\Theta)$. There exists $V(x)=x^{\intercal}Px$ with $P=P^{\intercal}\geq \mathbf{0}$ and $G(x) \in \mathcal{C}(\mathbb{R}^{n\times 1},\mathbb{R}^{m\times 1})$ such that over a domain $D \subset \mathbb{R}^{n\times 1}$, where $x=\mathbf{0} \in D$:
\begin{subequations} \label{Equation:HJBCoupled}
    \begin{align}
        0 =& \frac{dV(x)}{dx}^{\intercal}f(x)-\frac{1}{4}\frac{dV(x)}{dx}^{\intercal}g(x)g^{\intercal}(x)\frac{dV(x)}{dx} + h^{\intercal}(x)h(x)+G^{\intercal}(x)G(x), \label{Equation:HJB}\\
        0 =& \frac{dV(x)}{dx}^{\intercal}f(x), \hspace{20pt}  \forall x \in ker(h(x)). \label{Equation:HJBKernel}
    \end{align}
\end{subequations}
\end{lem}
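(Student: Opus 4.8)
The plan is to exploit that the plant \eqref{Equation:Leader}, \eqref{Equation:Spacing}, \eqref{Equation:Follower} is linear and time-invariant, $\dot x=Ax+Bu$ with
\[
A=\begin{pmatrix}-\tfrac{\alpha_1}{m_1}&0&0\\[2pt]1&0&-1\\[2pt]0&0&-\tfrac{\alpha_2}{m_2}\end{pmatrix},\qquad B=\begin{pmatrix}0\\0\\\tfrac1{m_2}\end{pmatrix},
\]
and to look for a \emph{quadratic} $V(x)=x^{\intercal}Px$, $P=P^{\intercal}$, together with a merely \emph{continuous} $G$. Two structural facts carry the argument: the second column of $A$ --- hence of $PA$ --- is zero, because the spacing $x_2$ is unactuated and does not drive the other states; and by \Cref{Assumption:Kernel} the regulated output vanishes exactly on the coordinate subspace $\ker h(\cdot)=\{x:x_2=0\}$. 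By \Cref{Assumption:LocallyQuadratic} we take, on $D$, $h^{\intercal}h=c^{2}x_2^{2}=x^{\intercal}C^{\intercal}Cx$ with $C=(0,c,0)$, $c\neq 0$; the $o(x_2^{2})$ correction to this equality is harmless because it vanishes on $\{x_2=0\}$ --- which equals $\ker h$ and contains $\ker Q_P$ ($Q_P$ below) --- so it is absorbed into $G$ for $D$ small.

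With $\tfrac{dV}{dx}=2Px$, the pair \eqref{Equation:HJBCoupled} decouples. Equation \eqref{Equation:HJBKernel} becomes $x^{\intercal}(A^{\intercal}P+PA)x=0$ for every $x$ with $x_2=0$, i.e.\ the three linear conditions that the $(1,1),(1,3),(3,3)$ entries of $A^{\intercal}P+PA$ vanish; solving them expresses $p_{12},p_{13},p_{23}$ through the free parameters $p_{11},p_{22},p_{33}$ (explicitly $p_{12}=\tfrac{\alpha_1}{m_1}p_{11}$, $p_{23}=-\tfrac{\alpha_2}{m_2}p_{33}$, and $p_{13}$ linear in $p_{11},p_{33}$). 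Equation \eqref{Equation:HJB} then becomes the pointwise identity $G(x)^{\intercal}G(x)=x^{\intercal}Q_Px$ with the constant symmetric matrix $Q_P:=-(A^{\intercal}P+PA)+PBB^{\intercal}P-C^{\intercal}C$; a continuous $G$ realizing it on $D$ exists precisely when $Q_P\geq\mathbf 0$, in which case $G(x)=\sqrt{x^{\intercal}Q_Px}$ works. So the lemma reduces to choosing $p_{11},p_{22},p_{33}$ so that simultaneously $P\geq\mathbf 0$ and $Q_P\geq\mathbf 0$.

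This last point is the step I expect to be the main obstacle. Since the second column of $PA$ is zero and the kernel conditions annihilate the $(1,1),(1,3),(3,3)$ entries of $A^{\intercal}P+PA$, the $\{1,3\}$ principal block of $Q_P$ is forced to be $m_2^{-2}(p_{13},p_{33})^{\intercal}(p_{13},p_{33})$, which is rank one; hence $x^{\intercal}Q_Px$ vanishes identically along $(p_{33},0,-p_{13})$, so $Q_P\geq\mathbf 0$ can hold only if that direction lies in $\ker Q_P$, and one cannot merely scale up a generic algebraic-Riccati solution. By a Schur/Haynsworth argument, $Q_P\geq\mathbf 0$ iff (i) the off-block pair $(Q_{12},Q_{23})$ is collinear with $(p_{13},p_{33})$ --- one linear equation --- and (ii) the scalar complement, which works out to $\tfrac{\alpha_2^{2}}{m_2^{4}}p_{33}^{2}-c^{2}-\mu^{2}m_2^{2}$ ($\mu$ the collinearity ratio), is nonnegative. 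I would use (i) to pin $p_{22}$ and then $p_{11}$ so that in fact $Q_{12}=Q_{23}=0$, i.e.\ $\mu=0$; then (ii) is just $\tfrac{\alpha_2^{2}}{m_2^{4}}p_{33}^{2}\geq c^{2}$, true for $p_{33}$ large. The $p_{11}$ so forced is a closed form in $p_{33}$ (with the degenerate case $\tfrac{\alpha_1}{m_1}=\tfrac{\alpha_2}{m_2}$ handled by a minor variant: there one takes $p_{11}=p_{33}$ and uses $p_{22}$ to set $\mu=0$), and $p_{22}$ comes out $\Theta(p_{33}^{2})$; a positivity check --- the $\{1,3\}$ block of $P$ being rank-one PSD, its remaining pivots governed by the $\Theta(p_{33}^{2})$ entry $p_{22}$ --- gives $P\geq\mathbf 0$ for $p_{33}$ large. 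Setting $G(x):=\sqrt{x^{\intercal}Q_Px}$ closes the construction.

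Finally I would record the reading that delivers \Cref{Problem:SOF}: viewing \eqref{Equation:HJB} as a Hamilton--Jacobi equality, the inverse-optimal input $u^{\star}=-\tfrac12 g^{\intercal}\tfrac{dV}{dx}=-B^{\intercal}Px$ gives $\dot V\leq -h^{\intercal}h-G^{\intercal}G\leq 0$, while \eqref{Equation:HJBKernel} says the drift alone preserves $V$ on $\ker h$ ($\tfrac{dV}{dx}^{\intercal}f=0$ there), so no control effort is required on the unobservable set, and \Cref{Assumption:ErrorDirection} fixes the sign of the associated static output gain --- exactly the certificate that a static output feedback $u=F(e)$ stabilizing \eqref{Equation:Leader}--\eqref{Equation:Follower} exists, hence that \Cref{Problem:OutputRegulationAsymptotic} is solvable. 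Everything apart from this positivity bookkeeping --- forming $A$, the linear solve for $p_{12},p_{13},p_{23}$, the final positivity checks --- is routine.
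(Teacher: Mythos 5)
Your proposal is correct and follows essentially the same route as the paper's appendix proof: reduce the Hamilton--Jacobi pair to the coupled Riccati/projection equations with $C=(0,\;c,\;0)$, use the kernel condition to express $p_{12},p_{13},p_{23}$ through $p_{11},p_{22},p_{33}$ (your formulas match \eqref{Equation:PKernel} exactly), and then fix the remaining entries so that both $P$ and the residual $Q_P=-(A^{\intercal}P+PA)+PBB^{\intercal}P-C^{\intercal}C$ are positive semidefinite. The only real difference is that the paper selects the boundary choice that makes $Q_P$ an exact rank-one square of a constant row vector $G=[G_1,\,0,\,G_3]$ --- your $Q_{12}=Q_{23}=0$ with $p_{33}=\lvert c\rvert m_2^2/\alpha_2$ reproduces \eqref{Equation:P_G} precisely --- whereas you keep the merely continuous $G(x)=\sqrt{x^{\intercal}Q_Px}$ and a margin $p_{33}>\lvert c\rvert m_2^2/\alpha_2$; for that strictly larger choice you still owe the principal-minor check $p_{11}p_{33}-p_{13}^2\geq 0$ on the $\{1,3\}$ block of $P$ (with $p_{13}$ a weighted average of $-p_{11}$ and $-p_{33}$ this is a genuinely tight constraint, met with equality in the symmetric-parameter case), so ``$p_{33}$ large'' alone does not yet close the positivity bookkeeping that the paper's explicit solution settles.
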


\begin{thm} \label{Theorem:SOF}
A static output feedback policy \eqref{Equation:SOF} exists that solves \Cref{Problem:SOF}, and thus \Cref{Problem:OutputRegulationAsymptotic}.
\end{thm}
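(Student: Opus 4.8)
The plan is to prove the theorem in two moves: first reduce the output‑regulation \Cref{Problem:OutputRegulationAsymptotic} to the state‑regulation \Cref{Problem:SOF}, and then exhibit a solution of \Cref{Problem:SOF} by reading a stabilizing static output feedback off the coupled Hamilton--Jacobi equations supplied by \Cref{Lemma:SOF}. For the reduction, the key observation is that by \Cref{Assumption:ReferenceImageInvariance,Assumption:Kernel} the regulated output is a continuous state function, $e = H(x_2,\bar s,\Theta)$ with $vec(e)=h(x_2,\bar s,\Theta)$, $H(0,\bar s,\Theta)=\mathbf{0}$, and $\ker(e)=\{x:x_2=0\}$. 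Hence any law of the form $u=F(e)$ is admissible for \eqref{Equation:StaticPolicy} (it uses only $y$ and $\bar y$, and in particular does not need $\bar v$), asymptotic stability of $x_1,x_2,x_3$ forces $vec(e(t))\to \mathbf{0}$ by continuity, and boundedness is automatic; so it suffices to produce the $F$ of \eqref{Equation:SOF}.

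For the second move I would take $V(x)=x^{\intercal}Px$, $P=P^{\intercal}\geq \mathbf{0}$, and $G(x)$ as guaranteed by \Cref{Lemma:SOF} on the domain $D$, and consider the control $u=F(e)$ extracted from the output‑realizable part of $-\tfrac{1}{2}g^{\intercal}(x)\tfrac{dV(x)}{dx}$ (for car‑following, $g$ enters only the follower equation, so this is a feedback of $x_2$, hence of $e$). Along the closed loop $\dot V = \tfrac{dV(x)}{dx}^{\intercal}\bigl(f(x)+g(x)F(e)\bigr)$; substituting \eqref{Equation:HJB} to eliminate $\tfrac{dV(x)}{dx}^{\intercal}f(x)$ and completing the square in $g^{\intercal}(x)\tfrac{dV(x)}{dx}$ should yield $\dot V \leq -h^{\intercal}(x)h(x)-G^{\intercal}(x)G(x)-\lVert u\rVert^2 \leq 0$ on $D$, so $V$ is a local Lyapunov function and the closed loop is stable. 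To upgrade to asymptotic stability I would run a LaSalle/Barbashin--Krasovskii argument: the largest invariant set inside $\{\dot V=0\}$ lies in $\{h(x)=\mathbf{0}\}=\{x_2=0\}$, where \eqref{Equation:HJBKernel} is precisely the consistency condition $\tfrac{dV(x)}{dx}^{\intercal}f(x)=0$; invariance together with the cascade $\dot x_1=-\tfrac{\alpha_1}{m_1}x_1$, $\dot x_2=x_1-x_3$, $\dot x_3=-\tfrac{\alpha_2}{m_2}x_3+\tfrac{1}{m_2}F(\mathbf{0})$ (with $F$ normalized so $F(\mathbf{0})=\mathbf{0}$) then pins the invariant set to the origin. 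This delivers a solution of \Cref{Problem:SOF}, hence of \Cref{Problem:OutputRegulationAsymptotic}.

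Alternatively --- and this is probably closer to the authors' intent, given the stated use of ``necessary and sufficient conditions in the literature'' --- one simply recognizes \eqref{Equation:HJBCoupled} as the hypothesis of a known necessary‑and‑sufficient characterization of static‑output‑feedback stabilizability (a coupled Riccati/Lyapunov, i.e.\ HJB, condition), so that the existence part follows by citing that result once \Cref{Lemma:SOF} has verified its hypothesis; a concrete $F$ --- in fact a proportional law $u=-k\,\mathbf{1}^{\intercal}h(x_2,\bar s,\Theta)$, legitimate near the origin by the locally quadratic \Cref{Assumption:LocallyQuadratic} --- is then synthesized from $P$.

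The step I expect to be the real obstacle is showing the stabilizing control is realizable as a function of $e$ (equivalently of $x_2$) alone rather than of the full state: the HJB‑optimal feedback $-\tfrac{1}{2}g^{\intercal}(x)\tfrac{dV(x)}{dx}$ generically depends on $x_1,x_3$ as well, so one must argue that the freedom in $G(x)$ together with the kernel condition \eqref{Equation:HJBKernel} were arranged exactly so that the output‑feedback realization exists. This is where the triangular structure of the car‑following model is essential --- the input enters only the follower equation, whose state already carries drag damping $\alpha_2/m_2$ --- so that feeding back the spacing error through $e$ suffices and no estimate of $x_1$, $x_3$ (hence of $\bar v$) is needed. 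Secondary bookkeeping obstacles are that the analysis holds only on the local domain $D$ where \Cref{Assumption:LocallyQuadratic} is valid, and that the completed‑square inequality must be checked to be strict enough off $\ker(e)$ for the LaSalle argument to close.
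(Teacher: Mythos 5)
Your ``alternative'' route is in fact the paper's proof: the reduction from \Cref{Problem:SOF} to \Cref{Problem:OutputRegulationAsymptotic} is argued exactly as you do (continuity plus \Cref{Assumption:Kernel} give $x_2\to 0 \implies H(x_2,\bar s,\Theta)\to\mathbf{0}$), and the existence part is obtained by invoking \Cref{Lemma:SOF} together with the cited necessary-and-sufficient conditions for static output feedback stabilizability, which yield the stabilizing law $u(x)=G(x)-\tfrac{1}{2}g^{\intercal}(x)\tfrac{dV(x)}{dx}$ and then pass to an output-feedback realization $u(h(x))=F(e(t))$ via the rank theorem. Your primary, self-contained Lyapunov/LaSalle route would also work, but with one correction: the stabilizing feedback is not ``the output-realizable part of $-\tfrac{1}{2}g^{\intercal}(x)\tfrac{dV(x)}{dx}$'' obtained by discarding terms; it is the full expression $G(x)-\tfrac{1}{2}g^{\intercal}(x)\tfrac{dV(x)}{dx}$, where $G$ is precisely the correction that the coupled equations \eqref{Equation:HJBCoupled} introduce so that this combination factors through $h(x)$ --- this is the resolution of the obstacle you correctly flag at the end. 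With that feedback the standard completion of squares gives the identity $\dot V=-h^{\intercal}(x)h(x)-u^{\intercal}u$ (your inequality double-counts the $G^{\intercal}G$ term, which is absorbed into the completed square), after which your LaSalle argument on $\{h=\mathbf{0}\}=\{x_2=0\}$, using the kernel condition \eqref{Equation:HJBKernel} and the cascade structure, closes the proof. What your explicit route buys over the paper's citation-based one is transparency about exactly where \eqref{Equation:HJBKernel} and the triangular structure of \eqref{Equation:CarFollowingDynamics} are used; what the paper's buys is brevity and the guarantee (from the necessary-and-sufficient characterization) that no output-feedback solution is being missed.
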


\begin{proof} First, if \Cref{Problem:SOF} has a solution, this implies that \Cref{Problem:OutputRegulationAsymptotic} is solvable because $u(t)=F(y,\bar{y})=F(e(t))$ and $\lim_{t\to\infty} x_2(t) = 0 \implies \lim_{t\to\infty} H(x_2(t),\bar{s},\Theta) = \mathbf{0}$ by  \Cref{Assumption:Kernel} and local continuity from \Cref{Assumption:LocallyQuadratic}. From \Cref{Lemma:SOF} there exists a positive semi-definite solution to \eqref{Equation:HJBCoupled} over a domain $D \subset \mathbb{R}^{n\times 1}$. It follows from \cite{1137557} and \cite{10.1007/BFb0110207} that there exists a stabilizing state-feedback policy 
\begin{equation}
    u(x) = G(x)-\frac{1}{2}g^{\intercal}(x)\frac{dV(x)}{dx}, \label{Equation:SOFMap} \\
\end{equation}

\noindent and using the rank theorem, \eqref{Equation:SOFMap} can be written as a static output feedback policy $u(h(x)) = F(e(t))$ over a region around the equilibrium point. Thus \Cref{Problem:SOF} has a solution.
\end{proof}

\subsection{Reference View Synthesis} \label{Section:ViewSynthesis}
We show how to synthesize an imagined reference image $\bar{y} = I_{Cam}(\bar{s},\Theta,\Omega)$ that places the leading car at the desired inter-vehicle spacing $\bar{s}$ as would be viewed by the following car.  In doing so, $\bar{y}$ needs to ideally satisfy \Cref{Assumption:ReferenceImageInvariance}. To do so, we consider an approach based on appearance flow \cite{10.1007/978-3-319-46493-0_18} which has been proposed in the context of 3D view transformation \cite{10.1007/978-3-319-46478-7_20}. \pagebreak However, our objective herein is not to transform the entire view, but rather to generate a view that corresponds to moving an object in the scene closer to or farther away from the observer through a frozen background. Moreover, unlike other  work subsequent to \cite{10.1007/978-3-319-46493-0_18}, namely \cite{8099565}, we do not worry about occlusion issues that are more relevant in the rotation of 3D objects and the need to inpaint the hidden sides of the object by hallucinating a view completion.

\begin{wrapfigure}{r}{0.50\textwidth}
\centering
\tikz{
    \node[anchor=south west,inner sep=0] (image) at (0,0,0) {\includegraphics[width=3in]{./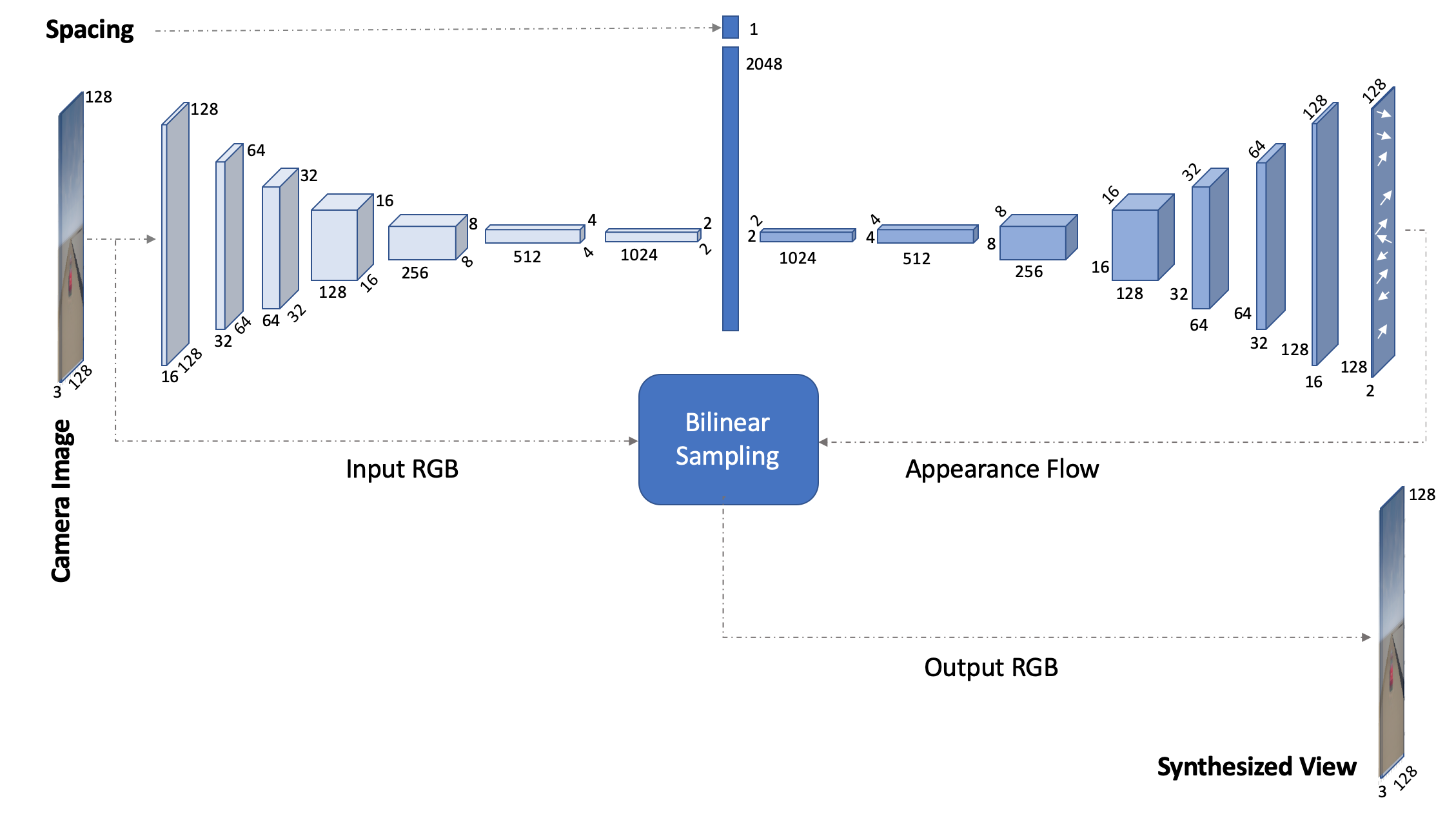}};
    \begin{scope}[x={(image.south east)},y={(image.north west)}]
        \node[align=center,text width=4cm, text=black] (c) at (0.4,0.91) {\small $\bar{s}$};
        \node[align=center,text width=4cm, text=black] (c) at (0.4,0.41) {\small $y$};
        \node[align=center,text width=4cm, text=black] (c) at (0.9,0.16) {\small $\hat{\bar{y}}$};
    \end{scope}
}
\caption{Reference View Synthesizer.} \label{Figure:NeuralNetArchitecture}
\end{wrapfigure}

Our reference view synthesizer is shown in \Cref{Figure:NeuralNetArchitecture} which takes as input a raw camera RGB image and the desired inter-vehicle spacing, and generates as output a view placing the leading vehicle at the desired spacing away from the following vehicle. The raw camera image is an input to an autoencoder that is trained to generate an appearance flow as its output based on $\bar{s}$, where this appearance flow determines which pixels from the camera raw image to copy from as opposed to generating pixels from scratch. The generated appearance flow and the raw camera image are both fed to a bilinear sampler and the output is an RGB image representing the synthesized view. Note that the camera raw image is an input to \emph{both} the autoencoder, and the bilinear sampler. The bilinear sampler is differentiable for backpropagation purposes as shown in \cite{NIPS2015_33ceb07b}.

The encoder is constructed from 8 convolutional neural networks (CNNs) each followed by a rectified linear unit (RELU) and with the last layer flattened. All have a stride of 2, padding of 1 and kernel of 4 except for the first layer which has a kernel size of 3, stride of 1 and padding of 1.

The decoder is constructed from 7 convolutional transpose neural networks with stride 2, padding 1 and kernel of 4, each followed by a RELU and a CNN with kernel 3, stride 1 and padding 1 followed by a tangent hyperbolic function. The last layer clearly outputs values between -1 and 1, representing the appearance flow. The input to the decoder is the flattened output of the encoder in addition to the desired spacing $\bar{s}$.

The bilinear sampler takes as input the raw camera RGB tensor and the appearance flow-field tensor which acts on an identity sampling grid to form a modified sampling grid. The modified sampling grid determines, for each output pixel, the location of the input pixels to copy from. Almost all background pixels are copied from their original locations as is to ensure background invariance, while the pixels representing the current location of the leading car and the desired location are impacted. The reference image can therefore be represented as
\begin{equation}\label{Equation:Synthesized_Yhat}
    \hat{\bar{y}} = I_{Syn}(\bar{s},I_{Cam}(\bar{s}-x_2,\Theta,\Omega)),
\end{equation}
which will be assumed to satisfy \Crefrange{Assumption:ReferenceImageInvariance}{Assumption:LocallyQuadratic}, and the following assumption.
\begin{assumpn} \label{Assumption:ViewSynthesizerError}
    \emph{View Synthesis Error}: For a given $\Theta$ and $\bar{s}$, $\exists \epsilon_1 >0$ such that
    \begin{equation} \label{Equation:ViewSynthesizerError}
        \lVert vec(\;\underbrace{I_{Cam}(\bar{s},\Theta,\Omega)}_{\bar{y}} -  \underbrace{I_{Syn}(\bar{s},I_{Cam}(s,\Theta,\Omega))}_{\hat{\bar{y}}}\;)\rVert_{2}  \leq \epsilon_1.
    \end{equation}
\end{assumpn}

\subsection{Block Diagram of the Feedback Loop} \label{Section:BlockDiagram}

\begin{figure*}[!htb]
\centering
\tikz{
    \node[anchor=south west,inner sep=0] (image) at (0,0,0) {\includegraphics[width=4.7in]{./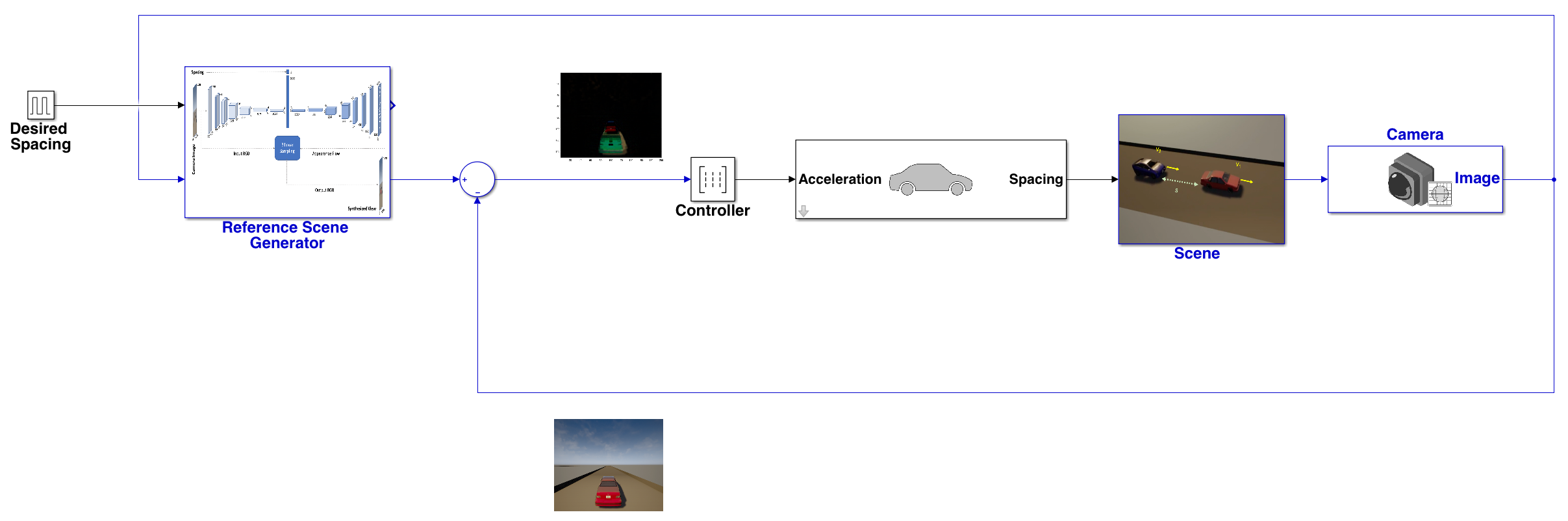}};
    \begin{scope}[x={(image.south east)},y={(image.north west)}]
        \node[align=left,text width=1cm, text=black] (c) at (0.11,0.75) {\small $\bar{s}$};
        \node[align=left,text width=1cm, text=black] (c) at (.11,0.59) {\small $y$};
        \node[align=left,text width=1cm, text=black] (c) at (0.30,0.59) {\small $\hat{\bar{y}}$};
        \node[align=left,text width=1cm, text=black] (c) at (.36,0.59) {\small $\hat{\bar{y}}-y$};
        \node[align=left,text width=1cm, text=black] (c) at (0.525,0.59) {\small $u$};
        \node[align=left,text width=1cm, text=black] (c) at (0.73,0.59) {\small $s$};
        \node[align=right,text width=1cm, text=black] (c) at (0.94,0.59) {\small $y$};
        \node[align=left,text width=1cm, text=black] (c) at (.80,0.2) {\small $y$};
    \end{scope}
}
\caption{Block Diagram of Feedback Loop.} \label{Figure:BlockDiagramPreliminary}
\end{figure*}

\begin{figure*}[!htb]
\centering
\tikz{
    \node[anchor=south west,inner sep=0] (image) at (0,0,0) {\includegraphics[width=4.7in]{./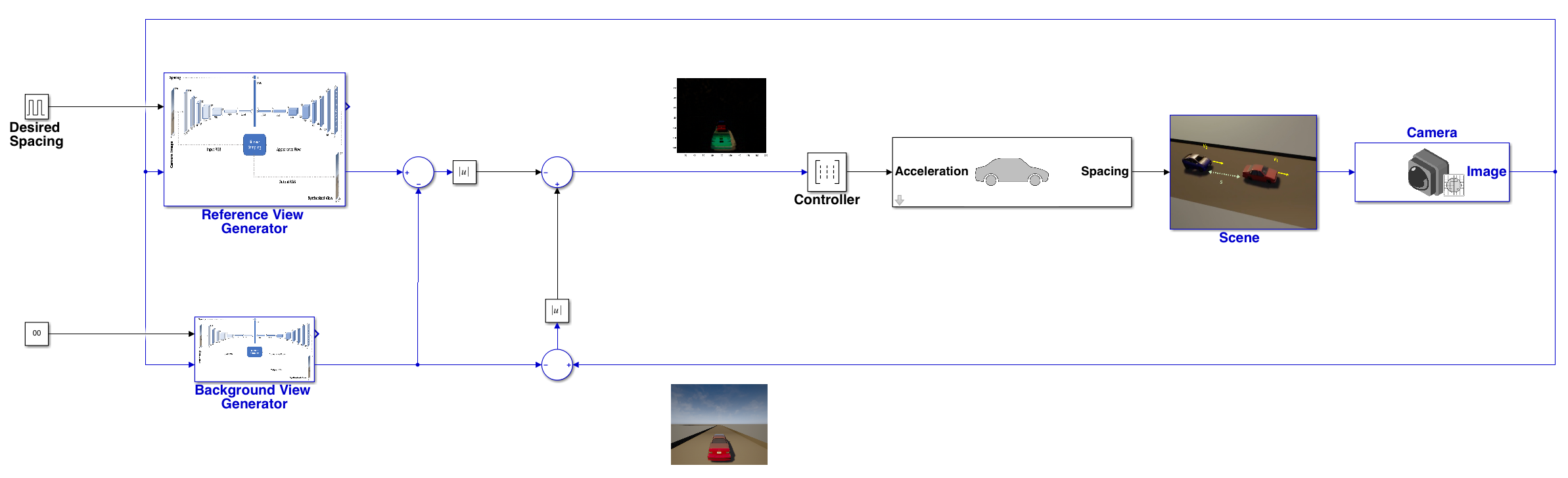}};
    \begin{scope}[x={(image.south east)},y={(image.north west)}]
        \node[align=left,text width=1cm, text=black] (c) at (0.11,0.73) {\small $\bar{s}$};
        \node[align=left,text width=1cm, text=black] (c) at (.11,0.59) {\small $y$};
        \node[align=left,text width=1cm, text=black] (c) at (0.09,0.26) {\small $s_0$};
        \node[align=left,text width=1cm, text=black] (c) at (.13,0.19) {\small $y$};
        \node[align=left,text width=1cm, text=black] (c) at (0.27,0.70) {\small $\hat{\bar{y}}$};
        \node[align=left,text width=1cm, text=black] (c) at (0.27,0.19) {\small $y_0$};
        \node[align=left,text width=1.5cm, text=black] (c) at (.34,0.73) {\small $\lvert \hat{\bar{y}}-y_0 \rvert$};
        \node[align=left,text width=1.5cm, text=black] (c) at (.325,0.43) {\small $\lvert y-y_0 \rvert$};
        \node[align=left,text width=3cm, text=black] (c) at (.50,0.48) {\small $-\lvert \hat{\bar{y}}-y_0 \rvert + \lvert y-y_0 \rvert$};
        \node[align=left,text width=1cm, text=black] (c) at (0.59,0.70) {\small $u$};
        \node[align=left,text width=1cm, text=black] (c) at (0.77,0.70) {\small $s$};
        \node[align=right,text width=1cm, text=black] (c) at (0.94,0.70) {\small $y$};
        \node[align=left,text width=1cm, text=black] (c) at (.80,0.2) {\small $y$};
    \end{scope}
}
\caption{Block Diagram of a Generalized Feedback Loop.} \label{Figure:BlockDiagram}
\end{figure*}

We start by showing a block diagram of the proposed controller with a camera and a reference view synthesizer in the loop as shown in \Cref{Figure:BlockDiagramPreliminary}. 

As straightforward as this may seem, the block diagram of \Cref{Figure:BlockDiagramPreliminary} may loose the generality \Cref{Assumption:ErrorDirection} states. To see this, consider  a case where $\bar{s}-s \geq 0$ and the following two 3-by-3 pixel images where we show a single color channel only, \emph{e.g.} Green:

\begin{subequations}
    \begin{tabular}{@{}*{2}{m{0.48\linewidth}@{}}}
        \begin{equation}\label{Equation:GreenBlackReference} 
            \bar{z}=\left[ \begin{matrix}
        	            B  & O  & B  \\
        	            B  & B  & B  \\
        	            B  & B  & B  \\
        	\end{matrix} \right], 
        \end{equation} &
        \begin{equation} \label{Equation:GreenBlack}
        	z=\left[ \begin{matrix}
        	            B  & B  & B  \\
        	            B  & B  & B  \\
        	            O  & O  & O  \\
        	\end{matrix} \right].
        \end{equation}
	\end{tabular}
\end{subequations}

The reference image $\bar{z}$ has 1 pixel in the first row denoted by the letter $O$ representing the color of an object traversing a background of color denoted by $B$. The image $z$ has 3 pixels in the last row representing the same observed object at a closer distance to the observer thus occupying more pixels. From \eqref{Equation:ErrorDirection}, we get
\begin{equation}\label{Equation:GreenBlackDirection}
    \mathbf{1}^{\intercal} \cdot vec(\bar{z} - z) = (O-B)+3(B-O) =2(B-O).
\end{equation}

\noindent If the object is black moving in a green background, then we have $O=0$ and $B=1$ and thus $\mathbf{1}^{\intercal} \cdot vec(\bar{z} - z) \geq 0$, otherwise if the object is green and moving through a black background, then $O=1$ and $B=0$ and thus $\mathbf{1}^{\intercal} \cdot vec(\bar{z} - z) \leq 0$.

To enforce the generality of \Cref{Assumption:ErrorDirection}, we need an expression that is invariant to the polarity of $(B-O)$, in other words a function of $\lvert B-O \rvert$. Consider a 3-by-3 pixel image $z_0$ representing the background only whose elements are all $B$ values. By adding and subtracting $z_0$ to \eqref{Equation:GreenBlackDirection} and taking absolute values, we get the following
\begin{equation}
    \label{Equation:GreenBlackAbs}
        \mathbf{1}^{\intercal} \cdot vec(-\lvert z_0 - \bar{z} \rvert +\lvert z_0 -z\rvert) = -\lvert B-O \rvert + 3 \lvert B-O \rvert =2 \lvert B-O \rvert, 
\end{equation}

\noindent which is the desired expression. \Cref{Equation:GreenBlackAbs} provides a clear breakdown to how the error signal can achieve the desired error directionality and magnitude. We therefore reorganize the block diagram in \Cref{Figure:BlockDiagramPreliminary} as shown in \Cref{Figure:BlockDiagram} to ensure the generality of \Cref{Assumption:ErrorDirection} is not lost.

Note that we may use the view synthesizer to generate a background by choosing $s_0$ to be a large value, thus the leading car essentially is vanishing from the view.

\subsection{Stability Analysis of the Learning-Based Controller} \label{Section:Stability}
The stability analysis will be discussed for the block diagram of \Cref{Figure:BlockDiagramPreliminary}. We treat the following nonlinear controller which has a proportional gain acting on a neural network based error signal 
\begin{equation}
    u = vec(K)^{\intercal} \cdot vec(\hat{\bar{y}}-y)=vec(K)^{\intercal} \cdot vec(\hat{e}), \label{Equation:ProportionalController}
\end{equation}

\noindent which relates to \eqref{Equation:StaticPolicy} and mainly \eqref{Equation:SOF}; and where $\hat{\bar{y}}-y = I_{Syn}(\bar{s},I_{Cam}(s,\Theta,\Omega)) - I_{Cam}(s,\Theta,\Omega)$. Note that $\hat{\bar{y}}$ reflects an internal model principle. $\hat{e}$ enables background invariance, thus generalization to backgrounds. Generalization and sample efficiency are key performance issues \cite{chen2020robust} and \cite{sax2019midlevel}.

We first note that the dynamical system \eqref{Equation:CarFollowingDynamics} can be decomposed into two subsystems, a stable uncontrollable subsystem governing the dynamics of $x_1(t)$ and a controllable subsystem governing the dynamics of $x_2(t)$ and $x_3(t)$. By decoupling the stable uncontrollable subsystem, we have:
\begin{subequations} \label{Equation:DecoupledCarFollowingDynamics}
	\begin{align}
		\dot{x_2}(t) =& - x_3(t), \label{Equation:DecoupledSpacing}\\
		\dot{x}_3(t) =& -\frac{\alpha_2}{m_2} x_3(t) + \frac{1}{m_2} u, \label{Equation:DecoupledFollower}\\
		y(t) =& I_{Cam}(\bar{s}-x_2,\Theta,\Omega), \label{Equation:DecoupledMeasurementModel}\\
		\hat{e}(t) =& \hat{\bar{y}} - y =I_{Syn}(\bar{s},I_{Cam}(s,\Theta,\Omega)) - I_{Cam}(\bar{s}-x_2,\Theta,\Omega). \label{Equation:DecoupledTrackingError}
	\end{align}
\end{subequations}

The following theorem demonstrates stability and thus addresses \cref{Problem:OutputRegulationUUB}.

\begin{thm} \label{Theorem:UUB}
Consider controller  \eqref{Equation:ProportionalController} and let $K = \mathbf{1}$. The dynamics \eqref{Equation:DecoupledCarFollowingDynamics} for a fixed $\Theta$ is UUB.
\end{thm}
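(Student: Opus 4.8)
The plan is to collapse the pixel‑valued feedback loop into a scalar nonlinear feedback perturbed by a bounded disturbance, verify that this scalar feedback is stabilizing, and then read off ultimate boundedness from an ISS‑type estimate valid on the local domain $D$ of \Cref{Assumption:LocallyQuadratic}. \emph{First}, I would rewrite the input as a function of $x_2$ alone plus a bounded term. With $K=\mathbf 1$, \eqref{Equation:ProportionalController} gives $u=\mathbf 1^\intercal vec(\hat e)$. Adding and subtracting $\bar y=I_{Cam}(\bar s,\Theta,\Omega)$ and invoking \Cref{Assumption:ReferenceImageInvariance} to identify $\bar y-y=H(x_2,\bar s,\Theta)$,
\[ vec(\hat e)=vec\big((\hat{\bar y}-\bar y)+(\bar y-y)\big)=h(x_2,\bar s,\Theta)-vec(\delta),\qquad \delta:=\bar y-\hat{\bar y}, \]
and \Cref{Assumption:ViewSynthesizerError} bounds $\lVert vec(\delta)\rVert_2\le\epsilon_1$. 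Hence $u=\phi(x_2)+w$ with $\phi(x_2):=\mathbf 1^\intercal h(x_2,\bar s,\Theta)$ and $|w|=|\mathbf 1^\intercal vec(\delta)|\le\lVert\mathbf 1\rVert_2\,\epsilon_1=:\bar d$, so \eqref{Equation:DecoupledSpacing}--\eqref{Equation:DecoupledFollower} becomes $\dot x_2=-x_3$, $\dot x_3=-\tfrac{\alpha_2}{m_2}x_3+\tfrac1{m_2}\phi(x_2)+\tfrac1{m_2}w$ on $D$.

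\emph{Next}, I would show that $\phi$ is a sector nonlinearity with strictly positive slope. By \Cref{Assumption:ErrorDirection} (with $x_2=\bar s-s$), $\phi(x_2)$ has the sign of $x_2$, so $x_2\phi(x_2)\ge 0$ and $\Phi(x_2):=\int_0^{x_2}\phi(\sigma)\,d\sigma\ge 0$. Cauchy--Schwarz together with \Cref{Assumption:ReferenceImageMonotonic} and \Cref{Assumption:LocallyQuadratic} gives $|\phi(x_2)|\le\lVert\mathbf 1\rVert_2\lVert h(x_2,\bar s,\Theta)\rVert_2\le\lVert\mathbf 1\rVert_2\,|c|\,|x_2|$ on $D$, while the locally quadratic reduction and the sign condition make $\phi$ behave near the origin like a line through it with positive slope $\bar c$, i.e. $\underline c\,x_2^2\le x_2\phi(x_2)\le\bar c_{\max}x_2^2$ on $D$ for some $\underline c>0$. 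Then the linearization of the closed loop at the origin is a two‑state linear system whose system matrix has trace $-\alpha_2/m_2<0$ and determinant $\bar c/m_2>0$, and is therefore Hurwitz.

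\emph{Finally}, I would conclude UUB. The quick route: a Hurwitz linear system driven by a bounded input with $|w|\le\bar d$ is UUB with an ultimate bound of order $\bar d$; restricting the initial conditions and $\epsilon_1$ so that the trajectory and the ultimate ball stay inside $D$ (where the higher‑order part of $\phi$ is dominated) gives the claim for the $(x_2,x_3)$ subsystem. More directly, I would take $V=\tfrac2{m_2}\Phi(x_2)+x_3^2-\mu\,x_2x_3$ with $\mu>0$ small, which is positive definite on $D$; a short computation gives $\dot V=-(\tfrac{2\alpha_2}{m_2}-\mu)x_3^2-\tfrac{\mu}{m_2}x_2\phi(x_2)+\tfrac{\alpha_2\mu}{m_2}x_2x_3+\tfrac2{m_2}x_3w-\tfrac{\mu}{m_2}x_2w$, and with $x_2\phi(x_2)\ge\underline c\,x_2^2$, a small enough $\mu$, and Young's inequality on the cross term and on the $w$‑terms, $\dot V\le-k_1x_2^2-k_2x_3^2+C\bar d^2$ for positive $k_1,k_2,C$; this strict ISS‑type estimate makes $\dot V<0$ outside a ball of radius $O(\bar d)=O(\lVert\mathbf 1\rVert_2\epsilon_1)$, so trajectories from a sufficiently small ball remain in $D$ and reach a fixed $O(\bar d)$‑ball in finite time and stay there, which is exactly \Cref{Definition:UUB}. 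Appending the decoupled, exponentially stable $x_1$‑subsystem \eqref{Equation:Leader} preserves this.

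The hard part is the second step: turning the hypotheses on the pixel error vector $h$ and on $\lVert h\rVert_2^2$ into a usable sector condition with a \emph{strictly} positive slope $\bar c$ for the scalar feedback $\phi=\mathbf 1^\intercal h$ — i.e. excluding the degenerate $\mathbf 1^\intercal h'(0)=0$ that \Crefrange{Assumption:ErrorDirection}{Assumption:LocallyQuadratic} do not literally rule out — and then bookkeeping the local domain $D$ so that both the admissible initial conditions and the $\epsilon_1$‑dependent ultimate ball lie where the locally quadratic approximation is valid.
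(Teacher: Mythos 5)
Your proof is essentially the paper's: the paper likewise takes $V = w_1x_2^2 + w_2x_2x_3 + w_3x_3^2 + w_4\int_0^{x_2}u^*(z)\,dz$ with $u^*(x_2)=\mathbf{1}^{\intercal}h(x_2,\bar{s},\Theta)$ (your $\Phi$ plus a quadratic form), splits $u = u^*(x_2) + (u-u^*)$ with $|u-u^*|\le\epsilon_2$ obtained from the view-synthesis error bound, and concludes $\dot V\le 0$ outside a ball. The ``hard part'' you flag --- needing $|\mathbf{1}^{\intercal}h(x_2,\bar{s},\Theta)|$ to be bounded below away from the origin, which the sign, monotonicity, and locally-quadratic assumptions do not literally supply --- is precisely the step the paper leaves as ``it can be shown that $\exists r>0$\dots'', so your explicit strict sector condition is a fair reading of what the argument actually requires rather than a departure from it.
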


\begin{proof}
We first construct an appropriate Lyapunov function candidate. Let $u^*(x_2) = vec(K)^{\intercal} \cdot vec(H(x_2,\bar{s},\Theta))$. Consider the following positive definite function for subsystem \eqref{Equation:DecoupledCarFollowingDynamics}
\begin{equation}\label{Equation:LyapunovLike}
    V(x_2,x_3) = w_1{x_2}^2 + w_2{x_2}{x_3} + w_3{x_3}^2 + w_4 \int\limits_{0}^{x_2} u^*(z) d{z},
\end{equation}
\noindent where $w_1>0$ is arbitrary, and $w_3>0$, and $w_2$ are chosen appropriately and such that $w_1 x_1^2 + w_2 x_2 x_3 + w_3 x_3^2$ is positive definite in $x_2$ and $x_3$. Moreover $w_4>0$ will be chosen appropriately noting that the integral term is nonnegative due to \Cref{Assumption:ErrorDirection} and $K = \mathbf{1}$.

From \Cref{Assumption:LocallyQuadratic}, $u^*(z)$ is locally continuous in $z$. Differentiating $V(x_2,x_3)$ along the trajectories of
\eqref{Equation:DecoupledCarFollowingDynamics}, we get
\begin{equation}\label{Equation:LyapunovDerivative}
    \begin{aligned}
        \dot{V}(x_2,x_3) =& 2 w_1 x_2 \dot{x}_2 +w_2 \dot{x}_2 x_3 + w_2 x_2 \dot{x}_3 + 2 w_3 x_3 \dot{x}_3 + w_4 \dot{x}_2 u^*(x_2),  \\
        =& \left(-2w_1 - \frac{\alpha_2 }{m_2} w_2\right) x_2 x_3 + \left(-w_2 - 2\frac{\alpha_2 }{m_2} w_3\right) x_3^2 + \frac{w_2}{m_2} x_2 u + \left(2\frac{w_3}{m_2} u - w_4 u^*\right) x_3.
    \end{aligned}
\end{equation}

Adding and subtracting $\frac{w_2}{m_2} x_2 u^*$ to \eqref{Equation:LyapunovDerivative}, we get
\begin{equation}\label{Equation:LyapunovDerivativeAddSubtract}
    \begin{aligned}
        \dot{V}(x_2,x_3) =& \left(-2w_1 - \frac{\alpha_2 }{m_2} w_2\right) x_2 x_3 + \left(-w_2 - 2\frac{\alpha_2 }{m_2} w_3\right) x_3^2 \\&+ \frac{w_2}{m_2} x_2 (u-u^*) + \frac{w_2}{m_2} x_2 u^* + \left(2\frac{w_3}{m_2} u - w_4 u^*\right) x_3.
    \end{aligned}
\end{equation}

Choosing $w_2=-2\frac{m_2}{\alpha_2}w_1$ to cancel the $x_2x_3$ term, and choosing $w_4=2\frac{w_3}{m_2}$ we get
\begin{equation}\label{Equation:LyapunovDerivativeCancel}
    \begin{aligned}
        \dot{V}(x_2,x_3) =& \left(2\frac{m_2}{\alpha_2}w_1 - 2\frac{\alpha_2}{m_2} w_3\right) x_3^2 - 2\frac{w_1}{\alpha_2}x_2 (u-u^*(x_2)) -2\frac{w_1}{\alpha_2}x_2u^*(x_2) \\&+2\frac{w_3}{m_2}x_3(u-u^*(x_2)).
    \end{aligned}
\end{equation}

We finally choose $w_3 > (\frac{m_2}{\alpha_2})^2 w_1$ to force the  coefficient of the first term in the right-hand side of \eqref{Equation:LyapunovDerivativeCancel} to be negative. We therefore write \eqref{Equation:LyapunovDerivativeCancel} as follows
\begin{equation}\label{Equation:LyapunovDerivativeInequalityAbs}
    \begin{aligned}
        \dot{V}(x_2,x_3) \leq & \left(2\frac{m_2}{\alpha_2}w_1 - 2\frac{\alpha_2}{m_2} w_3\right) |x_3|^2 + 2\frac{w_3}{m_2} |x_3| |u-u^*(x_2)| \\&-2\frac{w_1}{\alpha_2}|x_2||u^*(x_2)| +2\frac{w_1}{\alpha_2}|x_2||u-u^*(x_2)|,\\
        =& -|x_3|\left(\left(2\frac{\alpha_2}{m_2} w_3-2\frac{m_2}{\alpha_2}w_1 \right) |x_3|-2\frac{w_3}{m_2} |u-u^*(x_2)| \right) \\&-2\frac{w_1}{\alpha_2}|x_2|\left(|u^*(x_2)| -|u-u^*(x_2)|\right).
    \end{aligned}
\end{equation}

From \Cref{Assumption:ViewSynthesizerError}, it can be shown that $\exists \epsilon_2 >0$ such that $|u-u^*(x_2)| < \epsilon_2$, which when substituted in \eqref{Equation:LyapunovDerivativeInequalityAbs} we get
\begin{equation}\label{Equation:LyapunovDerivativeInequalityAbsEpsilon}
    \dot{V}(x_2,x_3) \leq  -|x_3|\left(\left(2\frac{\alpha_2}{m_2} w_3-2\frac{m_2}{\alpha_2}w_1 \right) |x_3|-2\frac{w_3}{m_2} \epsilon_2 \right) -2\frac{w_1}{\alpha_2}|x_2|\left(|u^*(x_2)| -\epsilon_2\right).
\end{equation}

It can be shown that $\exists r>0$ and a ball $B([x_2,x_3],r)$ around the origin such that if $[x_2,x_3] \notin B([x_2,x_3],r)$ then $\dot{V}(x_2,x_3) \leq 0$.
\end{proof}

\section{Conclusion} \label{Section:Conclusion}
We demonstrated that stable feedback control directly from raw pixels is plausible and promising, and that introduced assumptions hold reasonably well for the application domain considered within a  simulator environment. For further improvements and scalability, we need to investigate approaches to relax strong assumptions and have the theory encompassing of more practical scenarios and different types of motions and tracked objects, and to further provide quantitative and qualitative assessments on generalization and sample complexity. The method generalized well to different driving backgrounds that have not been seen before due to the ability of the synthesizer to be reasonably invariant to background changes. The approach provides a more clear path to apply control theory directly to pixels and establish safe and trustworthy dynamical systems that are more interpretable compared to purely end-to-end learning approaches. The approach can extend to various automatic control applications where a cheap camera sensor can be deployed for motion control.\footnote{Code is available at \nolinkurl{https://github.com/abukhalaf/FeedbackFromPixels_L4DC2021}}

\acks{Toyota Research Institute provided funds to support this work.}

\appendix
\section{Proof of \Cref{Lemma:SOF}} \label{Section:AppendedProofs}
\begin{proof} 
Note that $h(x)=h(x_2,\bar{s},\Theta)$, and therefore from  \Cref{Assumption:Kernel}, it follows that $ker(h(x))=\{x\in \mathbb{R}^{n\times 1}: x_2=0\}$. Moreover, by writing $f(x)=Ax$, $g(x)=B$, and locally $h^{\intercal}(x)h(x) = c^2 {x_2}^2$ from \Cref{Assumption:LocallyQuadratic}, and $G = [G_1, G_2, G_3]$, and where

\begin{subequations}
\vspace{-1em}
    \begin{tabular}{m{5.5cm}m{4cm}m{4cm}}
        \begin{equation}\label{Equation:A_Matrix} 
            A=\left[ \begin{matrix}
        	            -\frac{\alpha_1}{m_1}  & 0  & 0  \\
        	            1  & 0  & -1  \\
        	            0  & 0  & -\frac{\alpha_2}{m_2}  \\
        	\end{matrix} \right], 
        \end{equation} & 
        \begin{equation} \label{Equation:B_Matrix}
        	B=\left[ \begin{matrix}
        	            0 \\
        	            0 \\
        	            \frac{1}{m_2}\\
        	\end{matrix} \right],
        \end{equation} & 
        \begin{equation} \label{Equation:C_Matrix}
        	C=\left[ \begin{matrix} 
        	            0\;\;\; c\;\;\; 0
        	\end{matrix} \right],
        \end{equation}
	\end{tabular}
\end{subequations}

\noindent we can replace the Hamilton-Jacobi (HJ) equation \eqref{Equation:HJB} and \eqref{Equation:HJBKernel} over domain $D \subset \mathbb{R}^{n\times 1}$ with 
\begin{subequations} \label{Equation:RiccatiCoupled}
    \begin{align}
        0 =& A^{\intercal}P+PA-PBB^{\intercal}P + C^{\intercal}C +G^{\intercal}G, \label{Equation:Riccati}\\
        0 =& N(A^{\intercal}P+PA)N, \hspace{20pt}  N=I-C^{\intercal}(CC^{\intercal})^{-1}C. \label{Equation:RiccatiKernel}
    \end{align}
\end{subequations}
From the kernel condition \eqref{Equation:RiccatiKernel}, we have
\begin{equation}\label{Equation:PKernel}
    P=\left[ \begin{matrix}
	            p_{11}                      & \frac{\alpha_1}{m_1}p_{11}    & -\frac{\frac{\alpha_1}{m_1}p11+\frac{\alpha_2}{m_2}p33}{\frac{\alpha_1}{m_1}+\frac{\alpha_2}{m_2}}  \\
	            \frac{\alpha_1}{m_1}p_{11}  & p_{22}                        & -\frac{\alpha_2}{m_2}p_{33} \\
	            -\frac{\frac{\alpha_1}{m_1}p11+\frac{\alpha_2}{m_2}p33}{\frac{\alpha_1}{m_1}+\frac{\alpha_2}{m_2}} & -\frac{\alpha_2}{m_2}p_{33} & p_{33} \\
	\end{matrix} \right].
\end{equation}

From the algebraic Riccati equation \eqref{Equation:Riccati}, we obtain the following for $p_{11}$, $p_{22}$, $p_{33}$ and $G$:

\begin{subequations}\label{Equation:P_G}
    \begin{tabular}{@{}*{2}{m{0.48\linewidth}@{}}}
        \begin{equation}\label{Equation:p11}
            p_{11} = \frac{\lvert c\rvert \alpha_2 + {\lvert c\rvert}^2 \frac{ m_2}{\alpha2} - {\lvert c\rvert}^2 \frac{\frac{m_1}{\alpha1}\frac{m_2}{\alpha2}}{\frac{m_1}{\alpha1}+\frac{m_1}{\alpha1}}}{\lvert c\rvert \frac{\alpha_1 }{\alpha_1 m_2+\alpha_2 m_1}+\frac{{\alpha_1 }^2}{{m_1}^2}}, 
        \end{equation} & 
        \begin{equation}\label{Equation:G1}
            G_1 = -\frac{\alpha_1 m_2 p_{11} + \alpha_2 m_1 p_{33}}{\alpha_1 {m_2}^2 + \alpha_2 m_1 m_2},
        \end{equation} \\[-15pt]
        \begin{equation}\label{Equation:p22}
            p_{22} = \lvert c\rvert \alpha_2 + {\lvert c\rvert}^2 \frac{ m_2}{\alpha_2},
        \end{equation}&
         \begin{equation}\label{Equation:G2}
            G_2 = 0,
        \end{equation} \\[-15pt]
        \begin{equation}\label{Equation:p33}
            p_{33} = \lvert c\rvert \frac{ {m_2}^2}{\alpha_2},
        \end{equation} &
        \begin{equation}\label{Equation:G3}
            G_3 = \lvert c\rvert \frac{ m_2}{\alpha_2} .
        \end{equation}
    \end{tabular}
\end{subequations}

Substituting \eqref{Equation:p11}, \eqref{Equation:p22} and \eqref{Equation:p33} in \eqref{Equation:PKernel}, it follows that the principal minors of \eqref{Equation:PKernel} are nonnegative; hence $P \geq \mathbf{0}$.
\end{proof}

For specific numerical values of $\alpha_1$, $\alpha_2$, $m_1$ and $m_2$, a numerical procedure shown in \cite{KUCERA19951357} and \cite{doi:10.2514/1.16794} can be used to numerically solve \eqref{Equation:RiccatiCoupled}.

\pagebreak
\section{Simulation Results} \label{Section:SimulationResults}

\subsection{Training and Data Sets} \label{Section:Training}

\begin{figure*}[!htb]
    \centering
    \subfigure[Town 3: Red Car]
        {\label{Figure:Town3Red}\includegraphics[width=0.23\textwidth]{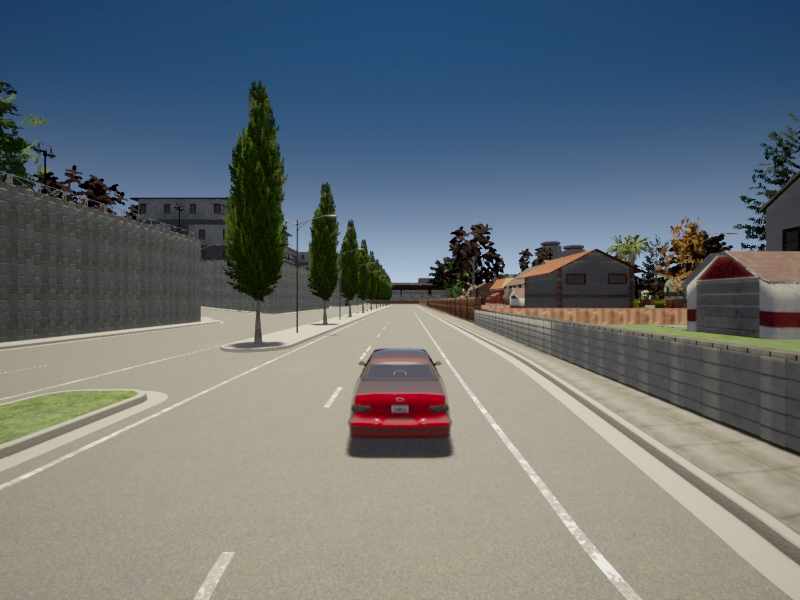}}
    \hfill
    \subfigure[Town 3: Blue Car]
        {\label{Figure:Town3Blue}\includegraphics[width=0.23\textwidth]{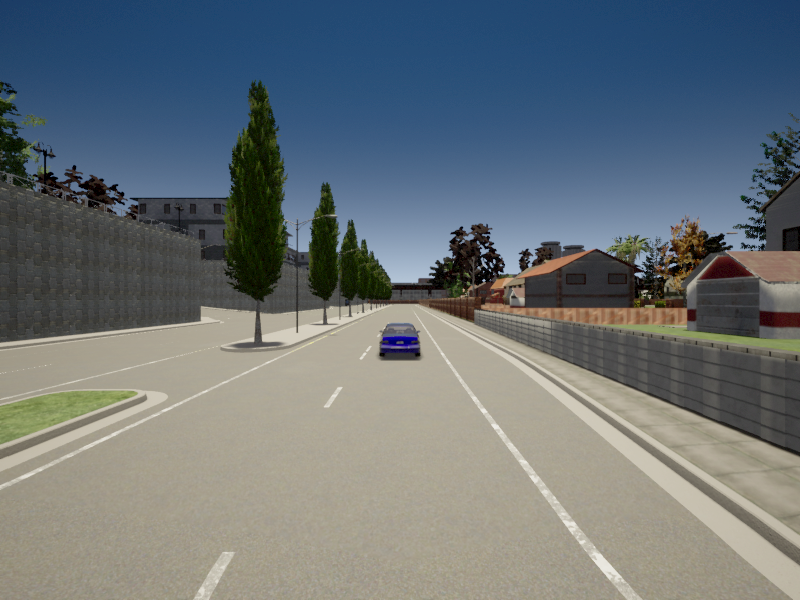}}
    \hfill
    \subfigure[Town 4: Red Car]
        {\includegraphics[width=0.23\textwidth]{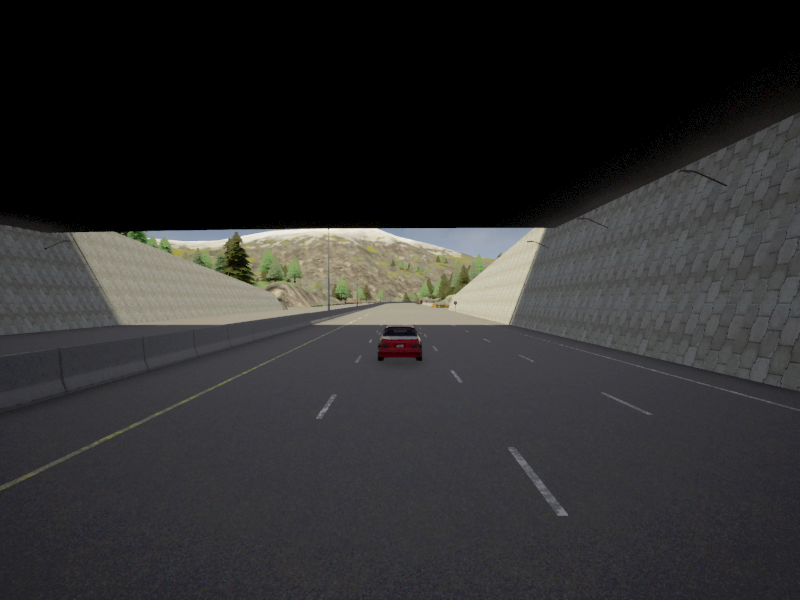}
        \label{Figure:Town4ARed}}
    \hfill
    \subfigure[Town 4: Blue Car]
        {\includegraphics[width=0.23\textwidth]{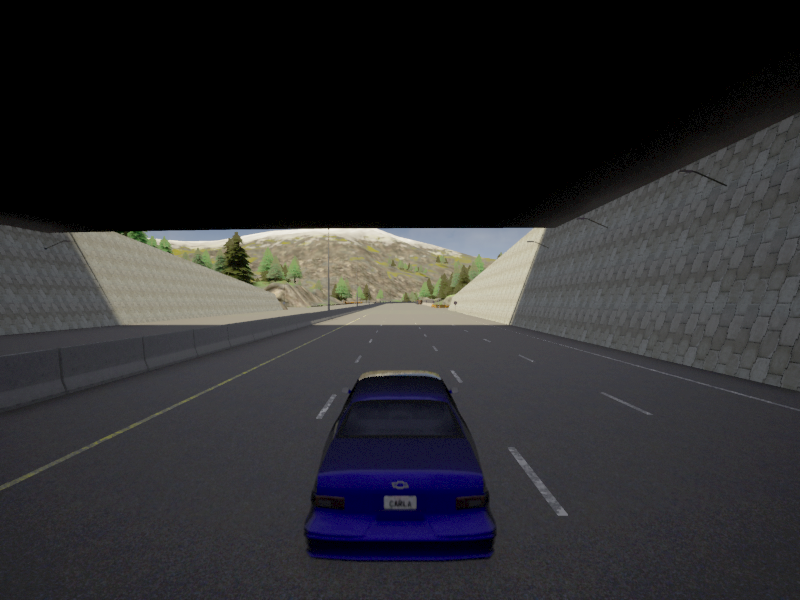}
        \label{Figure:Town4ABlue}}
    \hfill
    \subfigure[Town 4: Red Car]
        {\includegraphics[width=0.23\textwidth]{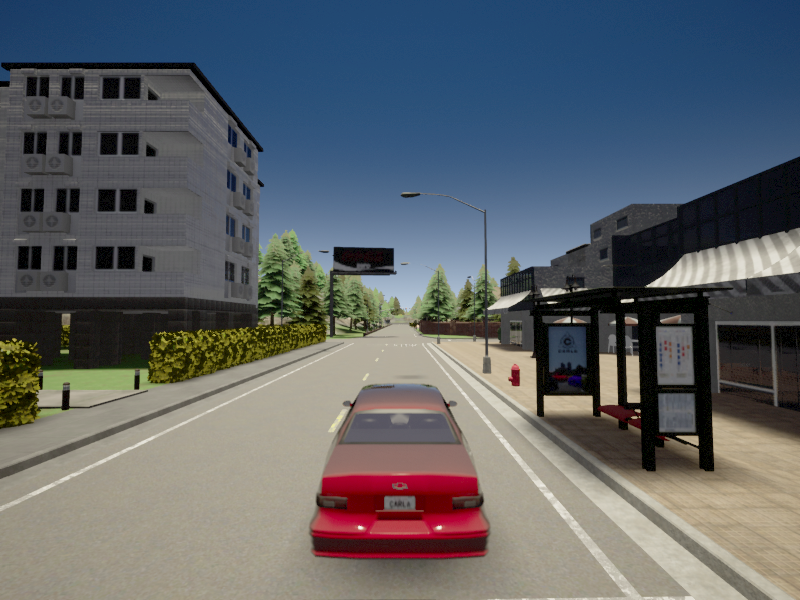}
        \label{Figure:Town4BRed}}
    \hfill
    \subfigure[Town 4: Blue Car]
        {\includegraphics[width=0.23\textwidth]{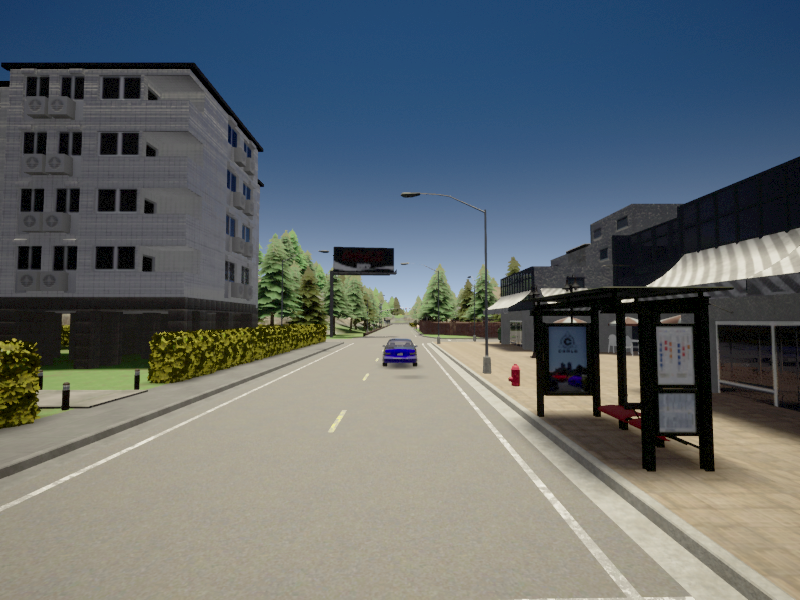}
        \label{Figure:Town4BBlue}}
    \hfill
    \subfigure[Town 5: Red Car]
        {\includegraphics[width=0.23\textwidth]{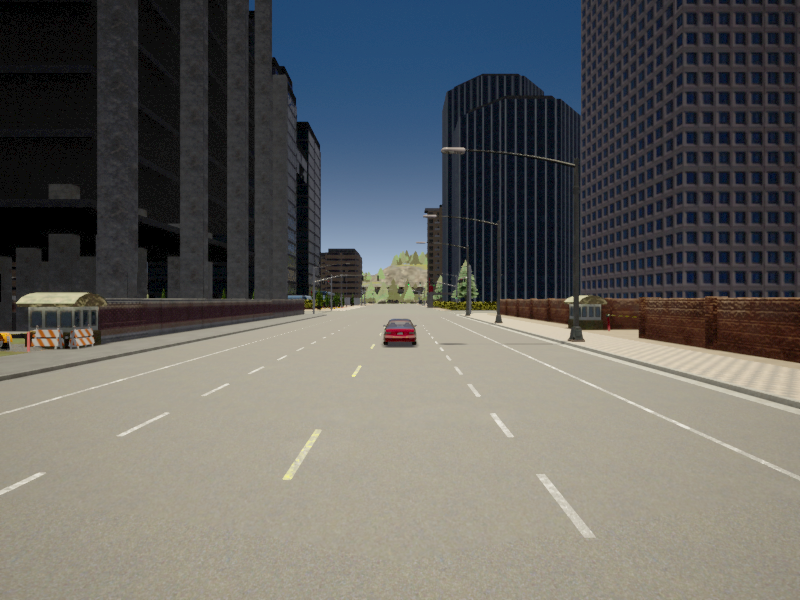}
        \label{Figure:Town5Red}}
    \hfill
    \subfigure[Town 5: Blue Car]
        {\includegraphics[width=0.23\textwidth]{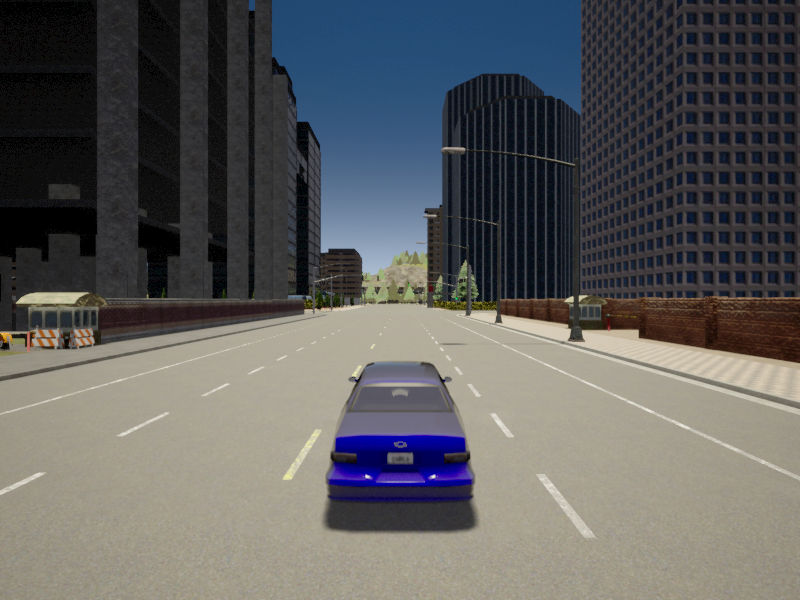}
        \label{Figure:Town5Blue}}
    \hfill
    \caption{Observed Camera Views from three CARLA Towns.}
    \label{Figure:ObservationViews}
\end{figure*}

We use release 0.9.9 of CARLA \cite{Dosovitskiy17}, a photorealistic urban driving simulator, in this study to both create a dataset to use for training the view synthesizer proposed in \Cref{Section:ViewSynthesis}, and to demonstrate our proposed feedback control strategy for the car-following scenario introduced in \Cref{Section:ProblemFormulation}. Our dataset is a set of raw images for observation views along with associated spacing distances collected at four different streets in three different CARLA towns or maps. Namely, one street in Town 3, two streets in Town 4, and one street in Town 5, all representing different urban environments. 
At each street, our data collection methodology is as follows:
\begin{itemize}[noitemsep]
    \item Spawn a leader and a follower cars.
    \item Place a front-facing camera on the follower car that faces the back of the leading car.
    \item Freeze the background, namely weather conditions, sun movement, cloud motion, wind or trees movements, traffic lights, and other agents if any. The only thing allowed to move in the frame is the leading car.
    \item With the follower vehicle fully stopped at speed \SI{0}{\meter\per\second}, record the leading car driving away at an arbitrary slow speed.
    \item Capture images continuously, along with the associated distance, covering an inter-vehicle distance of \SI{5}{\meter} to \SI{50}{\meter}, with one or two samples or images for each driven meter.
    \item Repeat the experiment for the same location but with a different color for the leading car. Data for both blue and red colors is gathered.
\end{itemize}

The neural network is trained in a supervised manner. The network takes two inputs --- a scalar and an RGB image, and it generates one output --- an RGB image. The inputs are the reference distance $\bar{s}$ and an RGB image from the front-facing camera showing the leading vehicle at some distance. The output is trained to generated a synthesized view RGB image that shows the leading vehicle at the desired $\bar{s}$ in the same background of the input RGB image.

The training set is organized in the form of 3-tuples. For each desired spacing $\bar{s}$, there are eight groups with each group corresponding to one of the streets and one of the leading vehicle colors shown in \Cref{Figure:ObservationViews}. Within each group, a 3-tuple has the following items:
\begin{enumerate}[noitemsep]
    \item A desired spacing $\bar{s}$.
    \item A captured RGB image showing the leading car at the desired $\bar{s}$ in the street.
    \item A captured RGB image for the \emph{same} leading car of the \emph{same} color at a distance between \SI{5.5}{\meter} and \SI{40}{\meter} in the \emph{same} street, and therefore the same background.
\end{enumerate}

Within each group, enough 3-tuples are created such that the third item covers the distance \SI{5.5}{\meter} to \SI{40}{\meter} at a \SI{1}{\meter} increment. Our training set considers three different values for $\bar{s}$, namely $\bar{s}=10$, $\bar{s}=20$ and $\bar{s}=30$. Therefore, we have a total of 24 groups of 3-tuples --- 8 groups per each value of $\bar{s}$ --- for a total of 774 different 3-tuples.

\subsection{Open-Loop Results} \label{Section:Open-Loop}

\begin{figure*}[!htb]
    \centering
    \subfigure[Camera View]
        {\includegraphics[width=0.23\textwidth]{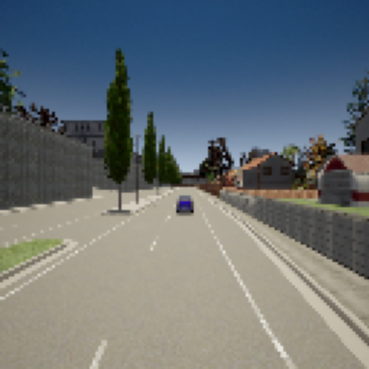}
        \label{Figure:Town3Input}}
    \hfill
    \subfigure[Camera View]
        {\includegraphics[width=0.23\textwidth]{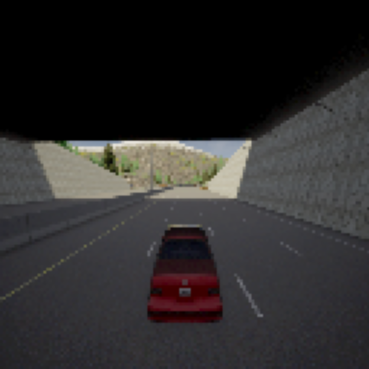}
        \label{Figure:Town4AInput}}
    \hfill
    \subfigure[Camera View]
        {\includegraphics[width=0.23\textwidth]{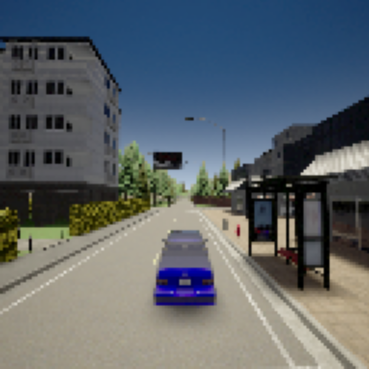}
        \label{Figure:Town4BInput}}
    \hfill
    \subfigure[Camera View]
        {\includegraphics[width=0.23\textwidth]{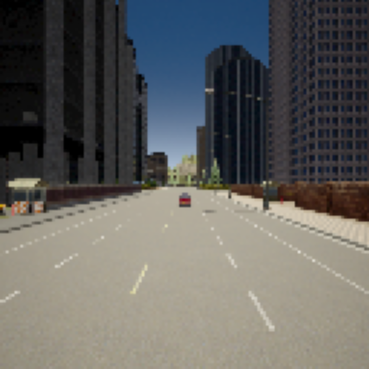}
        \label{Figure:Town5Input}}
    \hfill
    \subfigure[Synthesized View for \Cref{Figure:Town3Input}]
        {\includegraphics[width=0.23\textwidth]{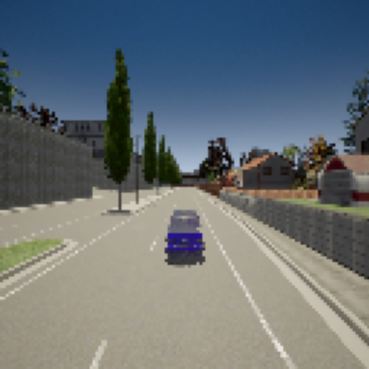}
        \label{Figure:Town3Output}}
    \hfill
    \subfigure[Synthesized View for \Cref{Figure:Town4AInput}]
        {\includegraphics[width=0.23\textwidth]{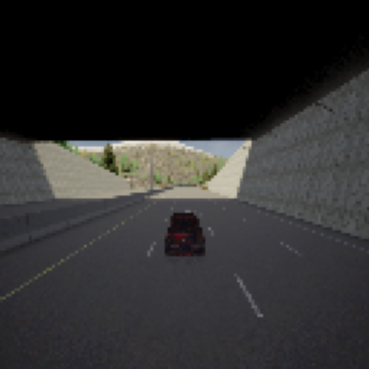}
        \label{Figure:Town4AOutput}}
    \hfill
    \subfigure[Synthesized View for \Cref{Figure:Town4BInput}]
        {\includegraphics[width=0.23\textwidth]{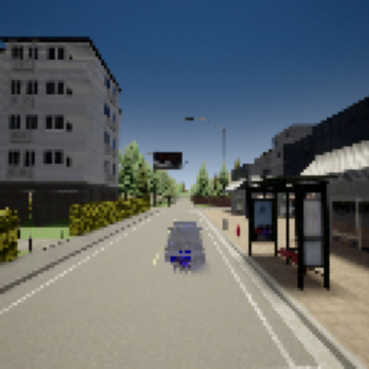}
        \label{Figure:Town4BOutput}}
    \hfill
    \subfigure[Synthesized View for \Cref{Figure:Town5Input}]
        {\includegraphics[width=0.23\textwidth]{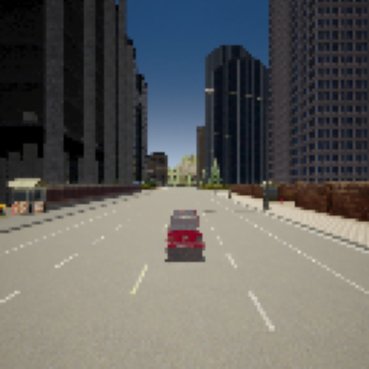}
        \label{Figure:Town5Output}}
    \caption{Synthesized Views for a Fixed $\bar{s}=10m$ and Varying Camera Views from the Training Set.}
    \label{Figure:SynthesizedViewsFixedS}
\end{figure*}

\begin{figure*}[!htb]
    \centering
    \subfigure[Camera View]
        {\includegraphics[width=0.23\textwidth]{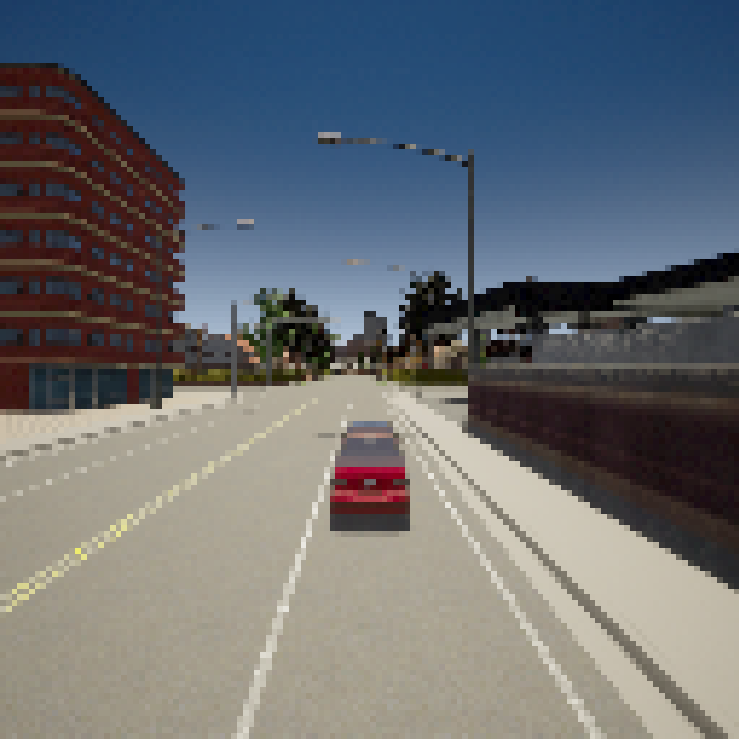}
        \label{Figure:Town3GasStationInputG}}
    \hfill
    \subfigure[Camera View]
        {\includegraphics[width=0.23\textwidth]{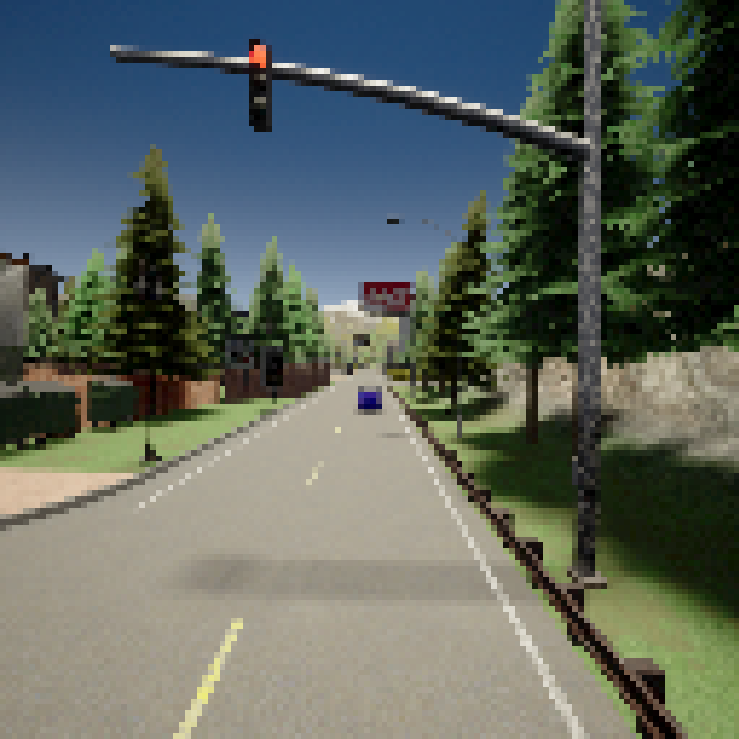}
        \label{Figure:Town4InputG}}
    \hfill
    \subfigure[Camera View]
        {\includegraphics[width=0.23\textwidth]{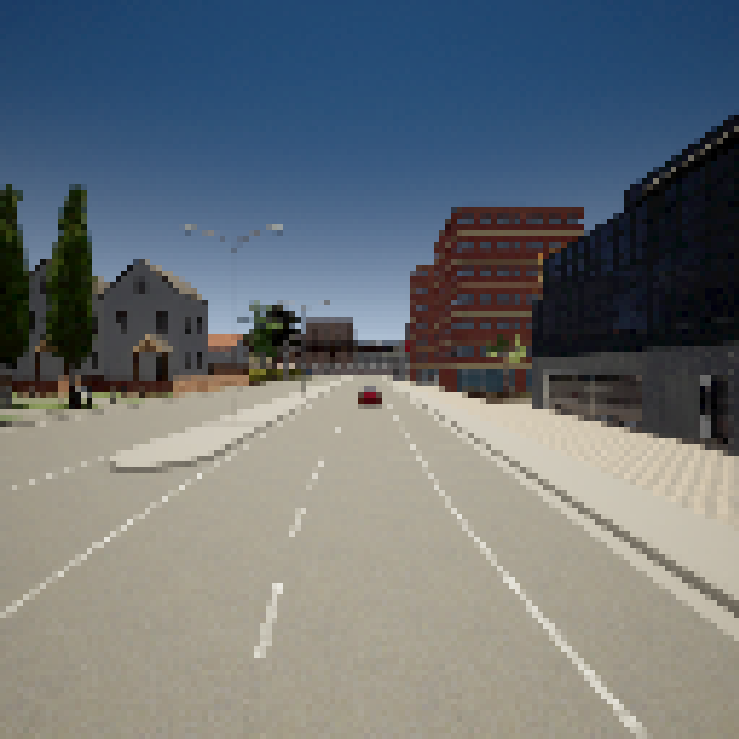}
        \label{Figure:Town3CircleInputG}}
    \hfill
    \subfigure[Camera View]
        {\includegraphics[width=0.23\textwidth]{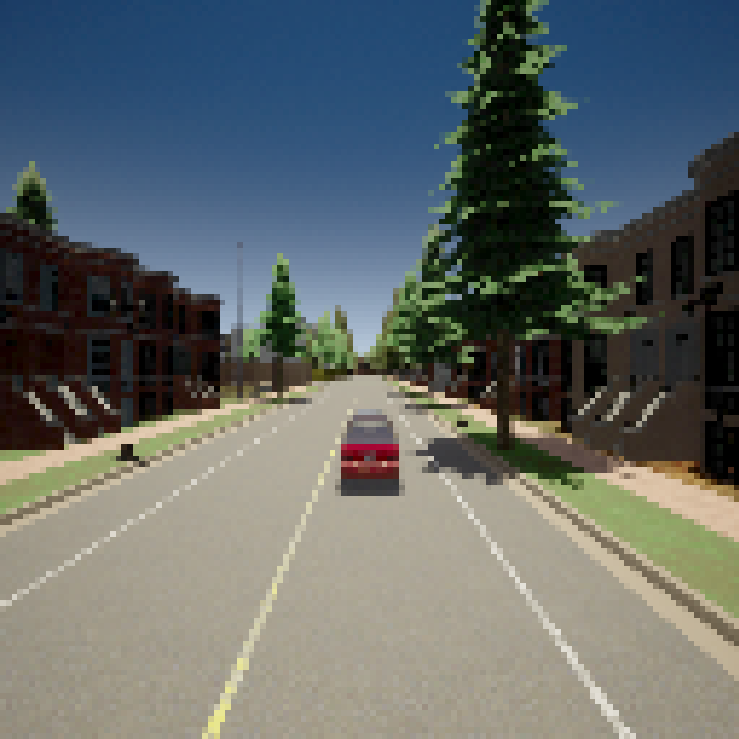}
        \label{Figure:Town5InputG}}
    \hfill
    \subfigure[Synthesized View for \Cref{Figure:Town3GasStationInputG}]
        {\includegraphics[width=0.23\textwidth]{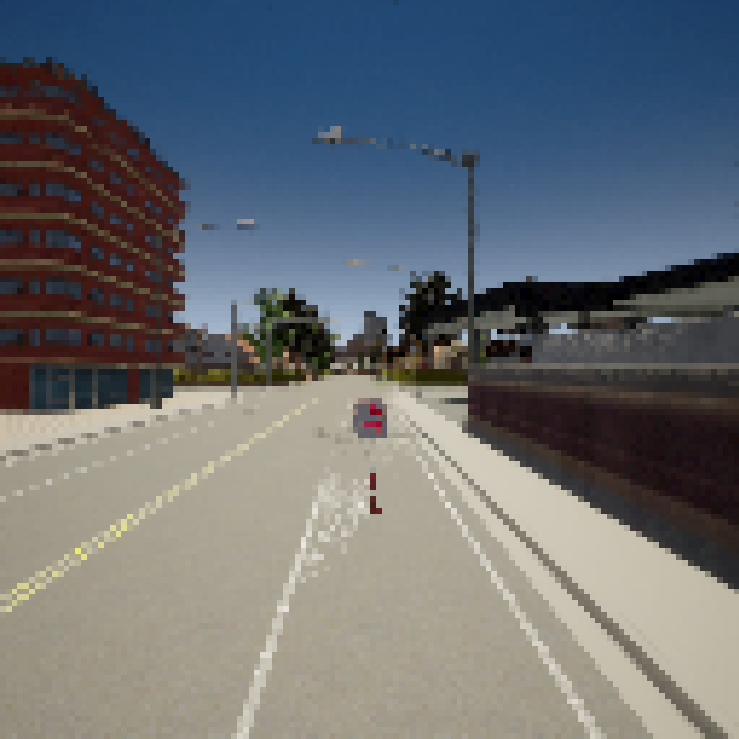}
        \label{Figure:Town3GasStationOutputG}}
    \hfill
    \subfigure[Synthesized View for \Cref{Figure:Town4InputG}]
        {\includegraphics[width=0.23\textwidth]{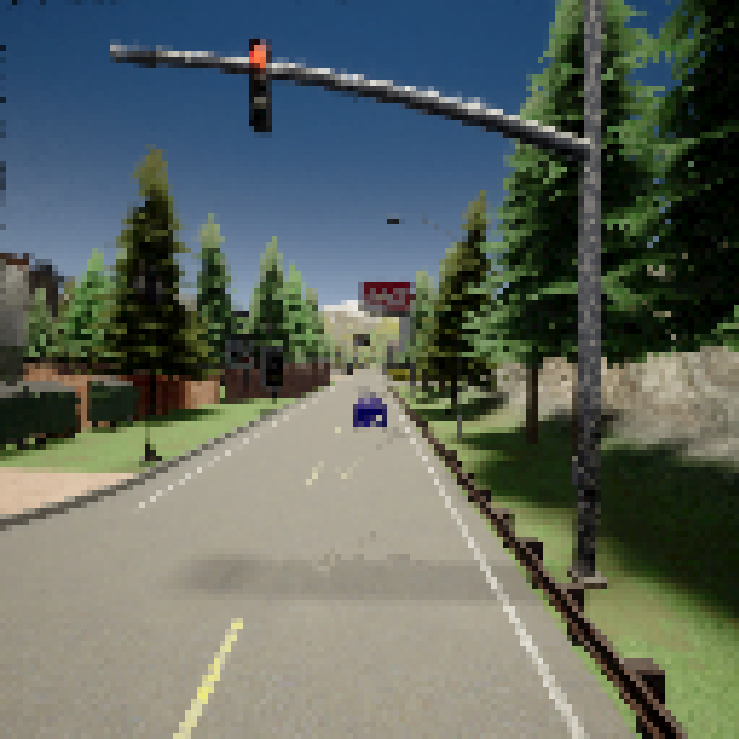}
        \label{Figure:Town4OutputG}}
    \hfill
    \subfigure[Synthesized View for \Cref{Figure:Town3CircleInputG}]
        {\includegraphics[width=0.23\textwidth]{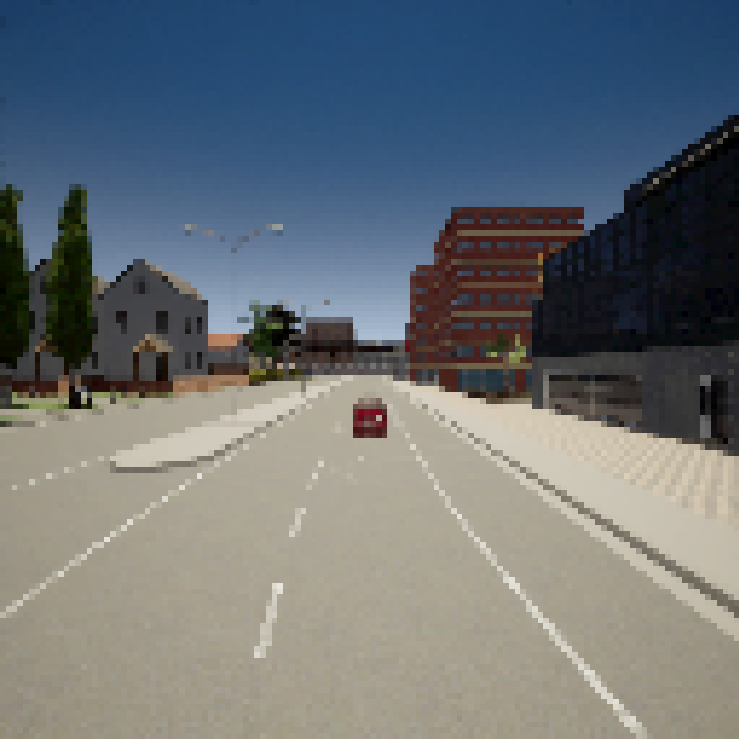}
        \label{Figure:Town3CircleOutputG}}
    \hfill
    \subfigure[Synthesized View for \Cref{Figure:Town5InputG}]
        {\includegraphics[width=0.23\textwidth]{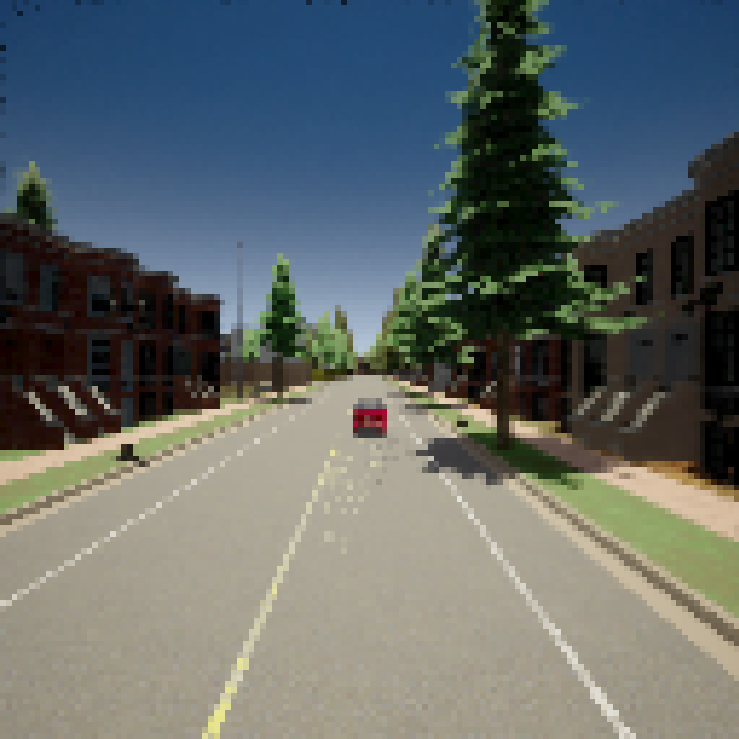}
        \label{Figure:Town5OutputG}}
    \caption{Synthesized Views for a Fixed $\bar{s}=20m$ and Varying Camera Views from Outside the Training Set.}
    \label{Figure:SynthesizedViewsFixedSGeneralization}
\end{figure*}

\begin{figure*}[!htb]
    \centering
    \subfigure[Camera View]
        {\includegraphics[width=0.23\textwidth]{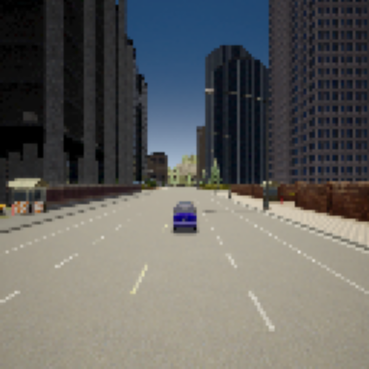}
        \label{Figure:Observed102030}}
    \hfill
    \subfigure[Synthesized View with $\bar{s}=10$]
        {\includegraphics[width=0.23\textwidth]{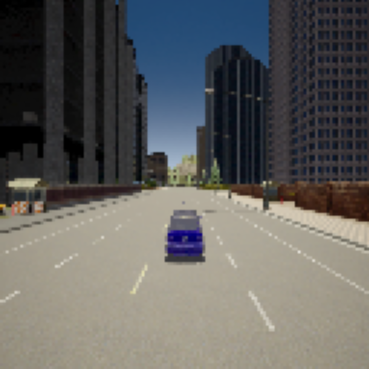}
        \label{Figure:Generated10}}
    \hfill
    \subfigure[Synthesized View with $\bar{s}=20$]
        {\includegraphics[width=0.23\textwidth]{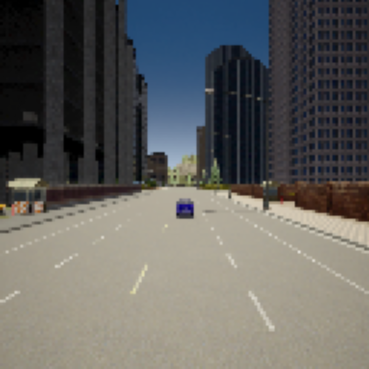}
        \label{Figure:Generated20}}
    \hfill
    \subfigure[Synthesized View with $\bar{s}=30$]
        {\includegraphics[width=0.23\textwidth]{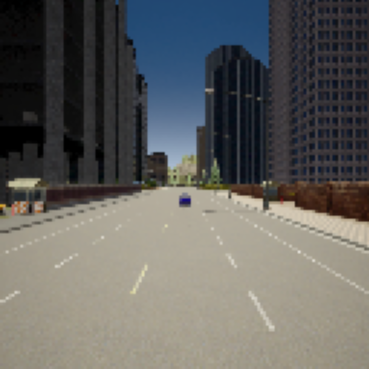}
        \label{Figure:Generated30}}
    \caption{Synthesized Views for a Fixed Camera View from the Training Set and Varying Distance $\bar{s}$.}
    \label{Figure:SynthesizedViewsVaryingS}
\end{figure*}

\begin{figure*}[!htb]
    \centering
    \subfigure[Camera View]
        {\includegraphics[width=0.23\textwidth]{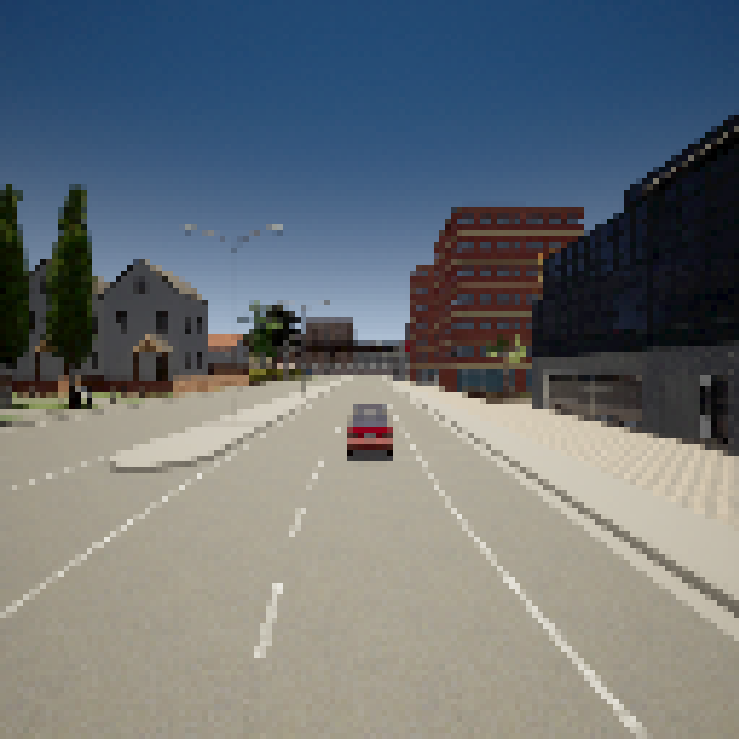}
        \label{Figure:Observed102030G1}}
    \hfill
    \subfigure[Synthesized View with $\bar{s}=10$]
        {\includegraphics[width=0.23\textwidth]{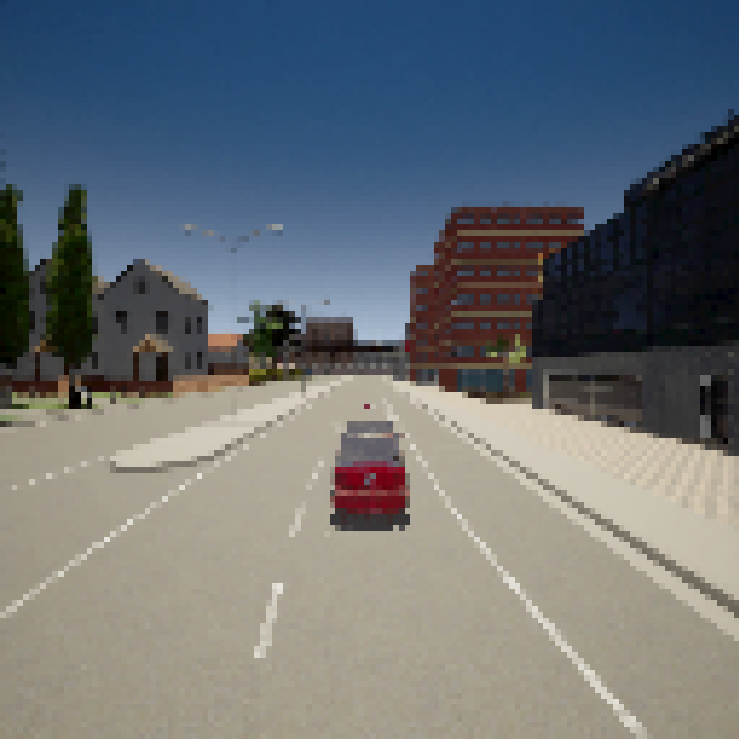}
        \label{Figure:Generated10G1}}
    \hfill
    \subfigure[Synthesized View with $\bar{s}=20$]
        {\includegraphics[width=0.23\textwidth]{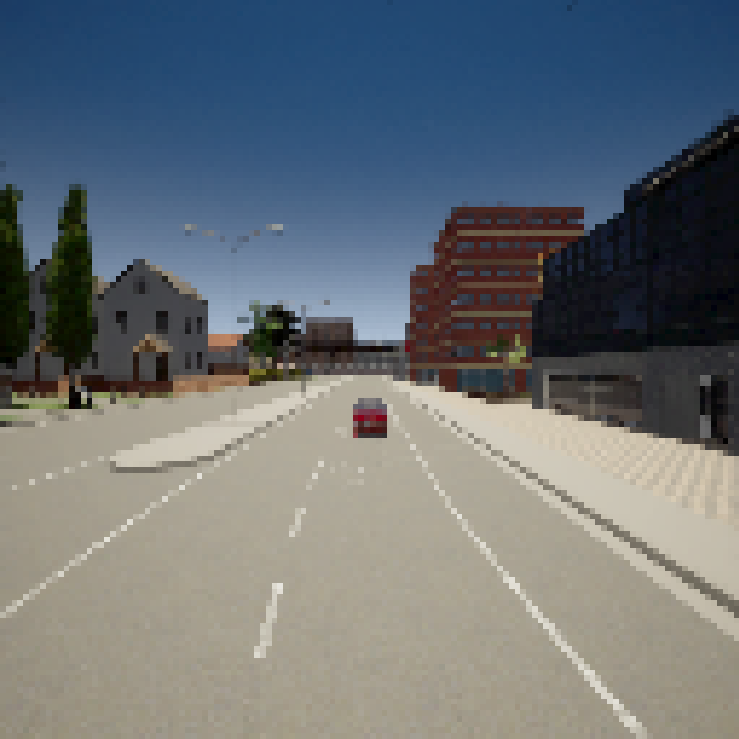}
        \label{Figure:Generated20G1}}
    \hfill
    \subfigure[Synthesized View with $\bar{s}=30$]
        {\includegraphics[width=0.23\textwidth]{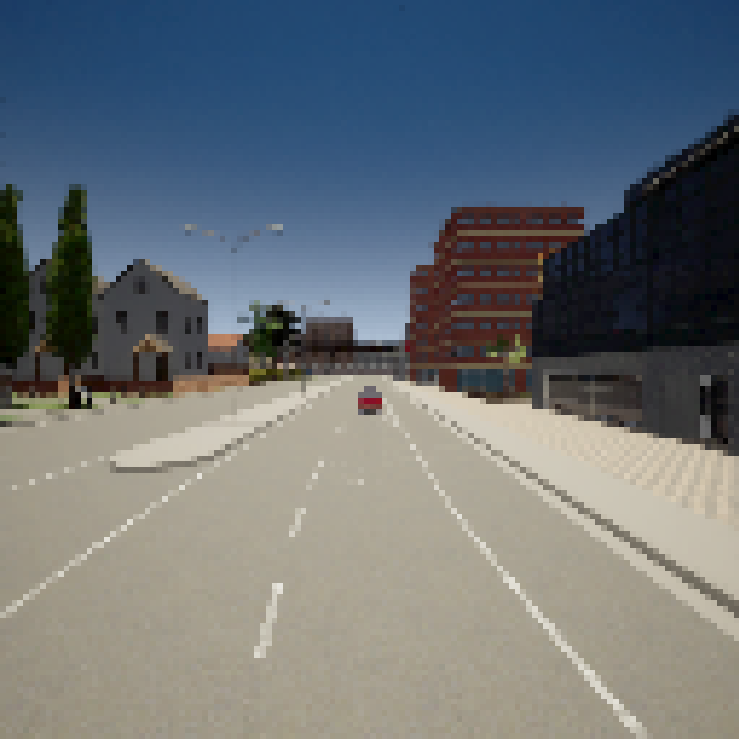}
        \label{Figure:Generated30G1}}
    \hfill
    \subfigure[Camera View]
        {\includegraphics[width=0.23\textwidth]{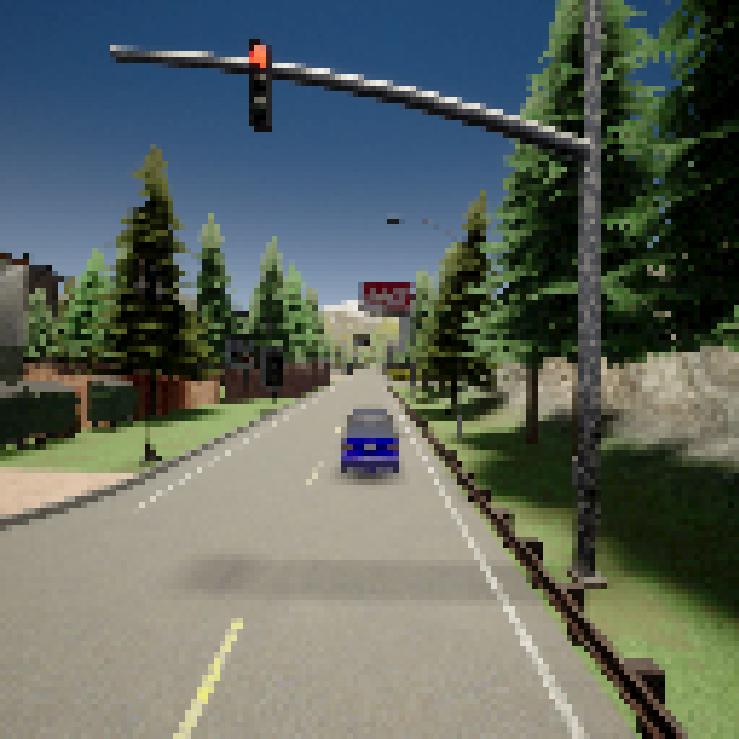}
        \label{Figure:Observed102030G2}}
    \hfill
    \subfigure[Synthesized View with $\bar{s}=10$]
        {\includegraphics[width=0.23\textwidth]{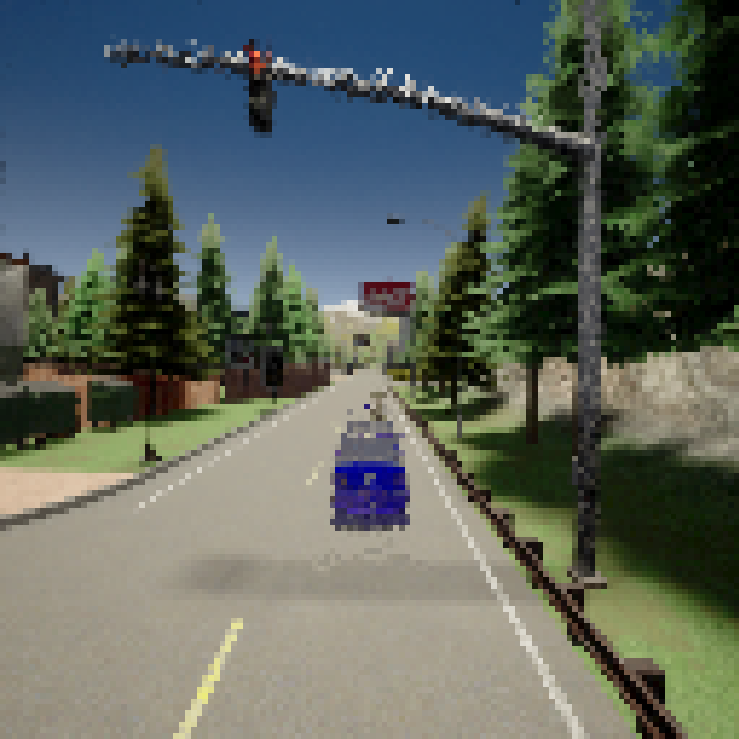}
        \label{Figure:Generated10G2}}
    \hfill
    \subfigure[Synthesized View with $\bar{s}=20$]
        {\includegraphics[width=0.23\textwidth]{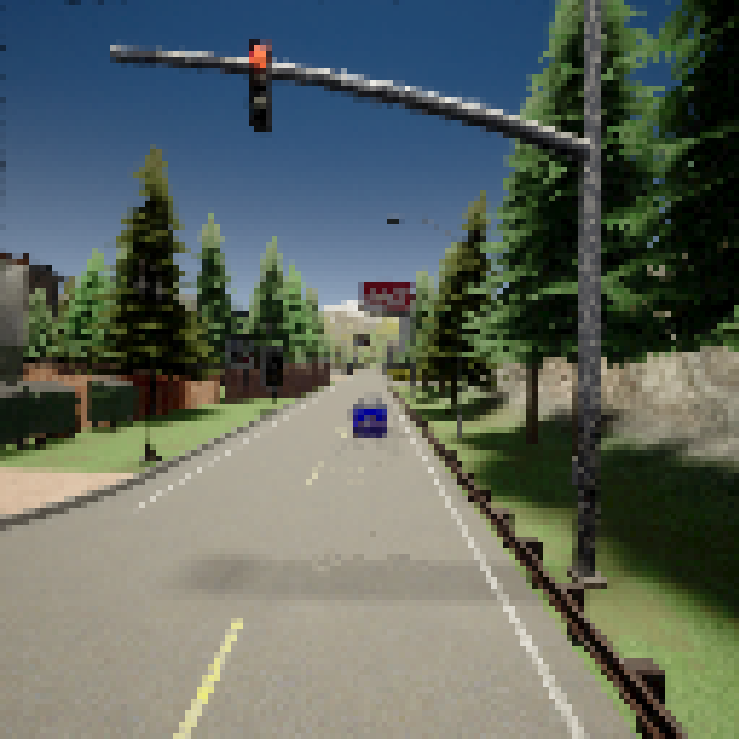}
        \label{Figure:Generated20G2}}
    \hfill
    \subfigure[Synthesized View with $\bar{s}=30$]
        {\includegraphics[width=0.23\textwidth]{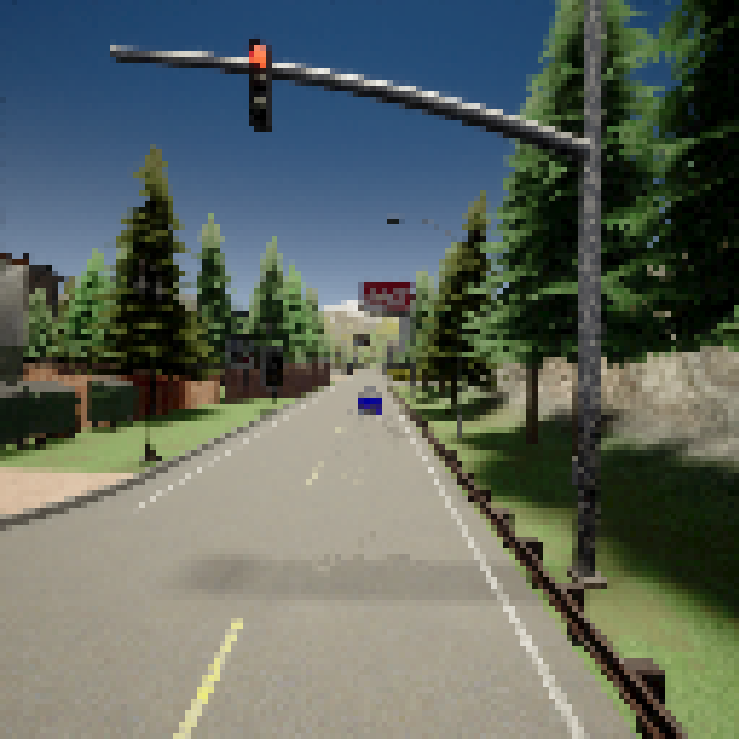}
        \label{Figure:Generated30G2}}
    \caption{Synthesized Views for Fixed Camera Views from Outside the Training Set and Varying Distance $\bar{s}$.}
    \label{Figure:SynthesizedViewsVaryingSG1}
\end{figure*}

In \Cref{Figure:SynthesizedViewsFixedS}, we show synthesized views placing the leading car at the same desired reference spacing of $\bar{s}=10$ for different camera views. The camera views are drawn from the training set. To show generalization,  \Cref{Figure:SynthesizedViewsFixedSGeneralization} shows synthesized views placing the leading car at the same desired reference spacing of $\bar{s}=20$ for different camera views. The camera views are drawn from outside the training set.

In \Cref{Figure:SynthesizedViewsVaryingS}, we show synthesized views placing the leading car at different desired reference spacings --- namely $\bar{s}=10$, $\bar{s}=20$ and $\bar{s}=30$ --- for the same camera view. The camera view is drawn from the training set. To show generalization,  \Cref{Figure:SynthesizedViewsVaryingSG1} shows synthesized views placing the leading car at different desired reference spacings --- namely $\bar{s}=10$, $\bar{s}=20$ and $\bar{s}=30$ --- for the same camera view. Each camera view is drawn from outside the training set.

\subsection{Closed-Loop Results} \label{Section:Closed-Loop}

\begin{figure*}[!htb]
    \centering
    \subfigure[Closed-loop Responses for $\bar{s}=10$]
        {\includegraphics[width=0.42\textwidth]{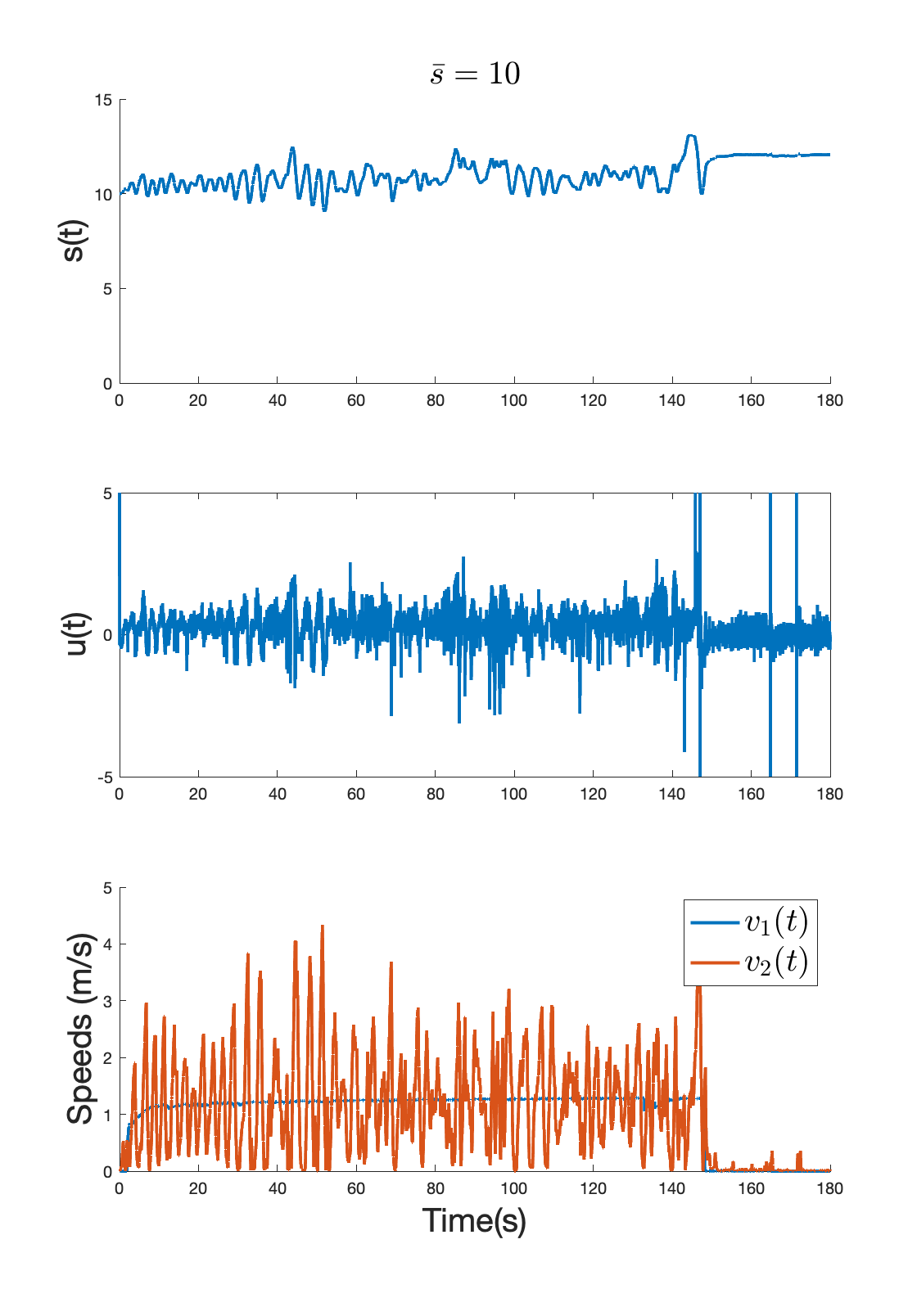}
        \label{Figure:ClosedLoopS10}}
    \hfill
    \subfigure[Closed-loop Responses for $\bar{s}=20$]
        {\includegraphics[width=0.42\textwidth]{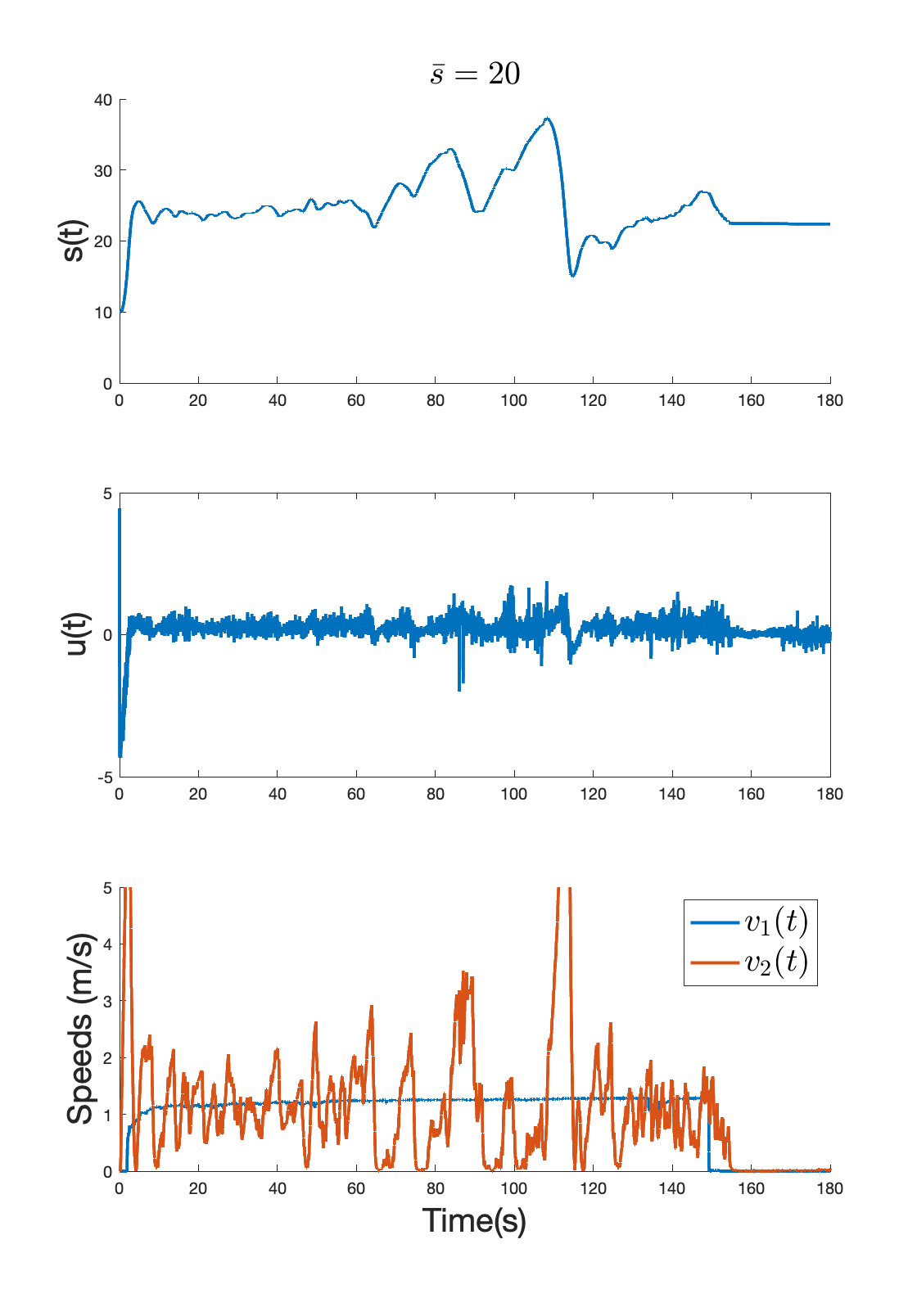}
        \label{Figure:ClosedLoopS20}}
    \caption{Closed-loop Responses.}
    \label{Figure:ClosedLoopS1020}
\end{figure*}

\Cref{Figure:ClosedLoopS1020} shows closed-loop responses for the control policy \eqref{Equation:ProportionalController} with $K = \frac{1}{50} \mathbf{1}$ using block diagram \Cref{Figure:BlockDiagramPreliminary}. The data shows that cars drive within the desired spacing $\bar{s}$. In both cases, the cars where initially spawn at an initial spacing distance of $s=10$. We have allowed the car to break and reverse direction in response to negative values of $u(t)$. Even for such a simple nonlinear controller with a proportional gain, the performance can be improved if the value of $K$ is tuned and optimized. Further, if we additionally close the loop on the speed, we can achieve tighter control.

\newpage


\bibliography{./bibtex/bib/IEEEabrv.bib,./bibtex/bib/refs.bib}

\end{document}